\def\CC{{\cal C}}
\def\TT{{\cal T}}
\def\FFF{{\cal F}}
\def\FF{{\mathbb F}}
\def\GF{{\mathbb G\mathbb F}}
\def\GG{{\cal G}}
\def\II{{\cal I}}
\def\MM{{\cal M}}
\def\NN{{\mathbb N}}
\def\div{\;\mbox{div}\;}
\def\mod{\;\mbox{mod}\;}
\def\indep{{\rm Indep}}
\newtheorem{theorem}{Theorem}
\newtheorem{proposition}[theorem]{Proposition}
\newtheorem{lemma}[theorem]{Lemma}
\newtheorem{corollary}[theorem]{Corollary}
\begin{document}
\title{Deciding First Order Logic Properties of Matroids}
\author{Tom{\'a}{\v s} Gaven{\v c}iak\thanks{Department of Applied Mathematics, Faculty of Mathematics and Physics, Charles University, Malostransk{\'e} n{\'a}m{\v e}st{\'\i} 25, 118 00 Prague, Czech Republic. E-mail: {\tt gavento@kam.mff.cuni.cz}. This research was supported by the grant GAUK 64110.}\and
        Daniel Kr{\'a}l'\thanks{Department of Mathematics, University of West Bohemia, Univerzitn\'i 8, 306 14 Pilsen, Czech Republic. This author is also affiliated with Department of Applied Mathematics and Institute for Theoretical Computer Science of Charles University. E-mail: {\tt dankral@kma.zcu.cz}. This author was supported by the grant GA{\v C}R P202/11/0196.}\and
	Sang-il Oum\thanks{Department of Mathematical Sciences, KAIST, 291 Daehak-ro Yuseong-gu, Daejeon 305-701, South Korea. E-mail: {\tt sangil@kaist.edu}. This research was supported by Basic Science Research Program through the National Research Foundation of Korea(NRF) funded by the Ministry of Education, Science and Technology(2011-0011653) and also by TJ Park Junior Faculty Fellowship.}}
\date{}
\maketitle
\begin{abstract}
Frick and Grohe [J. ACM 48 (2006), 1184--1206] introduced a notion
of graph classes with locally bounded tree-width and established that
every first order logic property can be decided in almost linear time
in such a graph class. Here, we introduce an analogous
notion for matroids (locally bounded branch-width) and show
the existence of a fixed parameter algorithm for first order logic properties
in classes of regular matroids with locally bounded branch-width.
To obtain this result, we show that the problem of deciding the existence
of a circuit of length at most $k$ containing two given elements is fixed
parameter tractable for regular matroids.
\end{abstract}

\section{Introduction}

Classes of graphs with bounded tree-width play
an important role both in structural and algorithmic graph theory.
Their importance from the structural point of view stems
from their close relation to graph minors. One of the main results
in this area asserts that graphs in a minor-closed class of graphs
admit a tree-like decomposition into graphs almost embeddable on surfaces.
In particular, a minor-closed class of graphs
has bounded tree-width if and only if one of the forbidden minors for the class
is planar~\cite{bib-minorsV}.
With respect to algorithms, there have been an enormous amount of reports
on designing efficient algorithms for classes of graphs with bounded tree-width
for various NP-complete problems. Most of such algorithms have been put into a uniform
framework by Courcelle~\cite{bib-courcelle90} who showed that every monadic second order
can be decided in linear-time for a class of graphs with bounded tree-width.
The result can also be extended
to computing functions given in the monadic second order logic~\cite{bib-courcelle97}.

Not all minor-closed graph classes have bounded tree-width and though
some important graph properties can be tested in linear or almost linear time
for graphs from these classes.
One example of such property is whether a graph has an independent or a dominating set of
size at most $k$. Clearly, such a property can be tested in polynomial time for a fixed $k$.
However, for the class of planar graphs, these properties could be tested
in linear time for any fixed $k$.
More generally, if a class of graphs have locally bounded tree-width (e.g., this
holds the class of planar graphs),
the presence of a subgraph isomorphic to a given fixed subgraph $H$ can be tested
in linear time~\cite{bib-eppstein1,bib-eppstein2}.
Frick and Grohe~\cite{bib-frick+} developed a framework unifying such algorithms.
They showed that every first order logic property can be tested in almost linear
time for graphs in a fixed class of graphs with locally bounded tree-width. Here, almost linear
time stands for the existence of an algorithm $A_\varepsilon$ running
in time $O(n^{1+\varepsilon})$ for every $\varepsilon>0$. More general results
in this direction apply to
classes of graphs locally excluding a minor~\cite{bib-dawar} or
classes of graphs with bounded expansion~\cite{bib-our-focs}.

Matroids are combinatorial structures generalizing graphs and linear independance.
The reader is referred
to Section~\ref{sect-notation} if (s)he is not familiar with this concept. For the exposition now,
let us say that every graph $G$ can be associated with a matroid $M(G)$ that captures
the structure of cycles and cuts of $G$. Though the notion of tree-width can be generalized
from graphs to matroids~\cite{bib-hlineny+,bib-hlineny+err}, a more natural decomposition parameter
for matroids is the branch-width which also exists for graphs~\cite{bib-minorsX}.
The two notions are closely related:
if $G$ is a graph with tree-width $t$ and its matroid $M(G)$ has branch-width $b$,
then it holds that $b\le t+1\le 3b/2$~\cite{bib-minorsX}.

Hlin{\v e}n{\'y}~\cite{bib-hlineny03,bib-hlineny06}
established an analogue of the result of Courcelle~\cite{bib-courcelle90} for classes of matroids
with bounded branch-width that are represented over a fixed finite field. In particular,
he showed that every monadic second order logic property can be decided in cubic time
for matroids from such a class which are given by their representation over the field (the decomposition
of a matroid needs not be a part of the input as it can be efficiently computed,
see Subsection~\ref{subsect-bw} for more details). Since any graphic matroid, i.e.,
a matroid corresponding to a graph, can be represented over any (finite) field and
its representation can be easily computed from the corresponding graph, Hlin{\v e}n{\'y}'s
result generalizes Courcelle's result to matroids.

The condition that a matroid is given by its representation seems to be necessary
as there are several negative results on deciding representability of matroids:
Seymour~\cite{bib-seymour81} argued that it is not possible to test representability
of a matroid given by the independance oracle in subexponential time. His construction
readily generalizes to every finite field and it is possible to show that the matroids
appearing in his argument have branch-width at most three.
If the input matroid is represented over rationals,
it can be tested in polynomial-time whether it is binary~\cite{bib-seymour80}
even without a constraint on his branch-width, but
for every finite field $\FF$ of order four or more and every $k\ge 3$,
deciding representability over $\FF$ is NP-hard~\cite{bib-hlineny06-mfcs}
for matroids with branch-width at most $k$ given by their representation
over rationals.

In this paper, we propose a notion of locally bounded branch-width and
establish structural and algorithmic results for such classes of matroids.
In particular, we aim to introduce an analogue of the result
of Frick and Grohe~\cite{bib-frick+} for matroids in the spirit how the result
of Hlin{\v e}n{\'y}~\cite{bib-hlineny06} is an analogue of that of Courcelle~\cite{bib-courcelle90}.
Our main result (Theorem~\ref{thm-main}) asserts that testing first order logic
properties in classes of regular matroids
with locally bounded branch-width
that are given by the independance oracle
is fixed parameter tractable.

\begin{table}
\begin{center}
\begin{tabular}{|p{3cm}|p{4.5cm}|p{4.5cm}|}
\hline
Properties & Graphs & Matroids \\
\hline
First order logic & bounded tree-width~\cite{bib-courcelle90} & bounded branch-width and represented over a~finite field~\cite{bib-hlineny06}\\
\hline
Monadic second order logic & locally bounded tree-width~\cite{bib-frick+} & {\bf locally bounded branch-width and regular}\\
\hline
\end{tabular}
\end{center}
\caption{Fixed parameter algorithms for testing first order and monadic second order logic
         properties for classes of graphs and matroids. Our result is printed in bold font.}
\label{table}
\end{table}

The first issue to overcome on our way is that the definition of graph
classes with locally bounded tree-width is based on distances between vertices in graphs,
a concept missing (and impossible to be analogously defined) in matroids.
Hence, we introduce a different sense of locality based on circuits of matroids and
show in Section~\ref{sect-ltw+lbw} that if a class of graphs has locally bounded tree-width,
then the corresponding class of matroids has locally bounded branch-width,
justifying that our notion is appropriately defined.
We then continue with completing the algorithmic picture (see Table~\ref{table})
by establishing that every first order logic property is fixed parameter tractable
for classes of regular matroids with locally bounded branch-width. Recall that
all graphic matroids are regular. Let us remark that
we focus only on establishing the fixed parameter tractability of the problem
without attempting to optimize its running time.

In one step of our algorithm, we need to be able to decide the existence of a circuit
containing two given elements with length at most $k$ in a fixed parameter
way (when parametrized by $k$).
However, the Maximum likelihood decoding problem for binary inputs,
which is known to be W[1]-hard~\cite{bib-downey}, can be reduced
to this problem (in the Maximum likelihood decoding problem,
one is given a target vector $v$ and a matrix $M$ and the task
is to decide whether there exists a vector $v'$ with at most $k$ non-zero coordinates
such that $v=Mv'$). Hence, it is unlikely that the existence of a circuit
with length at most $k$ containing two given elements is fixed parameter
tractable. So, we had to restrict
to a subclass of binary matroids. We have chosen regular matroids and
established the fixed parameter tractability of deciding the existence of
a circuit containing two given elements with length at most $k$ in this
class of matroids (Theorem~\ref{thm-circuit}). We think that this result
might be of independent interest.

\section{Notation}
\label{sect-notation}

In this section, we introduce notation used in the paper.

\subsection{Tree-width and local tree-width}
\label{subsect-tw}

A {\em tree-decomposition} of a graph $G$ with vertex set $V$ and edge set $E$
is a tree $T$ such that every node $x$ of $T$ is assigned a subset $V_x$ of $V$.
To avoid confusion, we always refer to vertices of graphs as to {\em vertices} and
those appearing in tree-decompositions or branch-decompositions as to {\em nodes}.
The tree-decomposition must satisfy that
\begin{itemize}
\item for every edge $vw\in E$, there exists a node $x$ such that $\{v,w\}\subseteq V_x$, and 
\item for every vertex $v\in V$, the nodes $x$ with $v\in V_x$ induce a subtree of $T$.
\end{itemize}
The second condition can be reformulated that for any two nodes $x$ and $y$, all the vertices
of $V_x\cap V_y$ are contained in the sets $V_z$ for all nodes $z$ on the path between $x$ and $y$ in $T$.
The {\em width} of a tree-decomposition is $\max_{x\in V(T)}|V_x|-1$ and
the {\em tree-width} of a graph $G$ is the minimum width of its tree-decomposition.
It can be shown that a graph has tree-width at most one if and only if it is a forest.
A class $\GG$ of graphs has {\em bounded tree-width} if there exists an integer $k$
such that the tree-width of any graph in $\GG$ is at most $k$.

The set of vertices at distance at most $d$ from a vertex $v$ of $G$ is denoted by $N^G_d(v)$ and
it is called the {\em $d$-neighborhood} of $v$.
A class $\GG$ of graphs has {\em locally bounded tree-width} if there exists a function $f:\NN\to\NN$ such that
the tree-width of the subgraph induced by the vertices of $N^G_d(v)$ is at most $f(d)$ for every graph $G\in\GG$,
every vertex $v$ of $G$ and every non-negative integer $d$. Unlike the notion of tree-width,
there is no notion of local tree-width for a single graph and the notion is defined for classes of graphs.

\subsection{Matroids}
\label{subsect-matroid}

We briefly introduce basic definitions related to matroids;
further exposition on matroids can be found in several monographs on the subject,
e.g., in~\cite{bib-oxley,bib-truemper}.
A {\em matroid} is a pair $(E,\II)$ where $E$ are the elements of $M$ and $\II$ is a family of subsets of $E$
whose elements are referred as {\em independent} sets and
that satisfies the following three properties:
\begin{enumerate}
\item the empty set is contained in $\II$,
\item if $X$ is contained in $\II$, then every subset of $X$ is also contained in $\II$, and
\item if $X$ and $Y$ are contained in $\II$ and $|X|<|Y|$, then there exists $y\in Y$ such that $X\cup\{y\}$
      is also contained in $\II$.
\end{enumerate}
The set $E$ is called the {\em ground set} of $M$.
Clearly, if $M=(E,\II)$ is a matroid and $E'\subseteq E$,
then $(E',\II\cap 2^{E'})$ is also a matroid; we say that this matroid is the matroid $M$ {\em restricted} to $E'$.

The maximum size of an independent subset of a set $X$ is called
the {\em rank} of $X$ and it can be shown that all inclusion-wise maximum subsets of $X$ have the same size.
Clearly, a set $X$ is independent if and only if its rank is $|X|$.
The rank function of a matroid, i.e., the function $r:2^E\to\NN\cup\{0\}$ assigning a subset $X$ its rank, is submodular
which means that $r(A\cap B)+r(A\cup B)\le r(A)+r(B)$ for any two subsets $A$ and $B$ of $E$.
The inclusion-wise maximal subsets contained in $\II$ are called {\em bases}.
An element $e$ of $M$ such that $r(\{e\})=0$ is called a {\em loop};
on the other hand, if $r(E\setminus\{e\})=r(E)-1$, then $e$ is a {\em coloop}.

Two most classical examples of matroids are vector matroids and graphic matroids.
In {\em vector matroids},
the elements $E$ of a matroid are vectors and a subset $X$ of $E$ is independent if the vectors in $X$
are linearly independent. The dimension of the linear hull of the vectors in $X$ is equal to the rank of $X$.
We say that a matroid $M$ is {\em representable} over a field $\FF$
if it is isomorphic to a vector matroid over $\FF$;
the corresponding assignment of vectors to the elements of $M$
is called a {\em representation} of $M$ over $\FF$.
A matroid representable over the two-element field is called {\em binary}.

If $G$ is a graph, its {\em cycle matroid $M(G)$} is the matroid with ground set formed by the edges of $G$ and
a subset $X$ is independent in $M(G)$ if $G$ does not contain a cycle formed by some edges of $X$.
A matroid is {\em graphic} if it is a cycle matroid of a graph.
Dually, the {\em cut matroid $M^*(G)$} is the matroid with the same ground set as $M(G)$
but a subset $X$ is independent if the graphs $G$ and $G\setminus X$ have the same number of components.
The matroid $M^*(G)$ is sometimes called the bond matroid of $G$. Cut matroids are called {\em cographic}.

We now introduce some additional notation related to matroids.
Inspired by a graph terminology, a {\em circuit} of a matroid $M$ is a subset $X$ of the elements of $M$
that is an inclusion-wise minimal subset that is not independent.
The following is well-known.

\begin{proposition}
\label{prop-2circuits}
Let $C_1$ and $C_2$ be two circuits of a matroid with $C_1\cap C_2\not=\emptyset$ and
$e$ and $f$ two distinct elements in $C_1\cup C_2$. There exists a circuit $C\subseteq C_1\cup C_2$
such that $\{e,f\}\subseteq C$.
\end{proposition}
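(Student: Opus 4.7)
The plan is to reduce to the non-trivial configuration of $e$ and $f$ and then either iterate the circuit elimination axiom or invoke the classical theorem on matroid connectivity.

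First, I would dispatch the trivial cases: if $\{e,f\}\subseteq C_i$ for some $i\in\{1,2\}$, the circuit $C:=C_i$ already works (this also covers the possibility $C_1=C_2$). So I may assume $C_1\ne C_2$ and, after swapping $C_1$ and $C_2$ if needed, that $e\in C_1\setminus C_2$ and $f\in C_2\setminus C_1$. I would also note that no element of $C_1\cup C_2$ can be a loop, since a loop would form a one-element circuit strictly contained in $C_1$ or in $C_2$.

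Next, fix any $z\in C_1\cap C_2$ and apply the strong circuit elimination axiom to $C_1$, $C_2$, $z$, retaining $e$: this yields a circuit $C_3$ with $e\in C_3\subseteq (C_1\cup C_2)\setminus\{z\}$. If $f\in C_3$, we are done. Otherwise I observe that $C_3$ cannot be contained in $C_1$, since $C_3\subsetneq C_1$ would violate the minimality of the circuit $C_1$; hence $C_3$ meets $C_2\setminus\{z\}$, and I would iterate by replacing the pair $(C_1,C_2)$ by $(C_3,C_2)$, aiming to close the argument by induction on a suitable measure (for instance $|C_1\cup C_2|$ with ties broken by $|C_1\cap C_2|$).

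The main obstacle is ensuring that the iteration actually terminates: in the awkward subcase $C_1\setminus C_2\subseteq C_3$ one gets $|C_3\cup C_2|=|C_1\cup C_2|$, so the naive size measure does not strictly decrease, and one has to choose $z$ more carefully or exploit that $C_3\cap(C_2\setminus C_1)\ne\emptyset$ in order to reduce a secondary parameter. A cleaner, though non-self-contained, alternative is to pass to $M|(C_1\cup C_2)$ and invoke the classical theorem of Tutte that ``lying in a common circuit'' is an equivalence relation on the non-loop elements of a matroid: the relations $e\sim z$ (witnessed by $C_1$) and $z\sim f$ (witnessed by $C_2$) then combine transitively to produce a circuit of $M|(C_1\cup C_2)$, i.e.\ a circuit of $M$ contained in $C_1\cup C_2$, that contains both $e$ and $f$.
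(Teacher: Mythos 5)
The paper offers no proof of this proposition at all --- it is introduced with ``The following is well-known'' and left to the standard references \cite{bib-oxley,bib-truemper} --- so there is no in-paper argument to match yours against; I can only judge the proposal on its own terms. It does contain a correct proof, but only through your second route: passing to $M$ restricted to $C_1\cup C_2$ and invoking the classical theorem that ``$e=f$ or some circuit contains both $e$ and $f$'' is an equivalence relation is exactly the standard derivation, since a circuit of the restriction is precisely a circuit of $M$ contained in $C_1\cup C_2$, and transitivity through any $z\in C_1\cap C_2$ (note $e\neq z\neq f$ in the reduced case, as $e\in C_1\setminus C_2$, $f\in C_2\setminus C_1$, $z\in C_1\cap C_2$) yields the desired circuit. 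Your first route, however, is not yet a proof: you correctly isolate the problematic subcase $C_1\setminus C_2\subseteq C_3$, where $|C_3\cup C_2|=|C_1\cup C_2|$, but the suggested secondary measure $|C_1\cap C_2|$ is not shown --- and is not in general guaranteed --- to move monotonically under the replacement $(C_1,C_2)\mapsto(C_3,C_2)$, so termination remains open; this difficulty is essentially the entire content of the classical theorem you import in route two, so route one should not be advertised as a routine iteration of the elimination axiom. One further small point: your loop remark needs the extra observation that a one-element circuit $C_i=\{\ell\}$ would force $C_1=C_2=\{\ell\}$ (any circuit meeting a loop equals that loop), which is impossible since $C_1\cup C_2$ contains two distinct elements; the phrase ``strictly contained'' does not by itself exclude the case where $C_i$ is the loop. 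With route two stated as the proof and route one demoted to a remark, the argument is complete and consistent with the paper's (implicit) reliance on the standard literature.
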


If $M=(E,\II)$ is a matroid, then the dual matroid $M^*$ of $M$ is the matroid with ground set $E$ such that
a set $X$ is independent in $M^*$ if and only if the complement of $X$ contains a base of $M$.
A circuit of $M^*$ is referred to as a {\em cocircuit} of $M$.
Dual matroids of graphic matroids are cographic and vice versa. 

A {\em $k$-separation} of a matroid $M$ is a partition of its ground set into two sets $A$ and $B$ such
that $r(A)+r(B)-r(M)=k-1$. A matroid $M$ is {\em connected} if it has no $1$-separation,
i.e., its ground set cannot be partitioned into two non-empty sets $A$ and $B$ with $r(A)+r(B)=r(M)$.
This is equivalent to the statement that for every pair $e_1$ and $e_2$ of elements of $M$,
there exists a circuit in $M$ that contains both $e_1$ and $e_2$.
The notions of connectivity for matroids and graphs are different.
A matroid $M(G)$ corresponding to a graph $G$ is connected if and only if
$G$ is $2$-connected~\cite{bib-oxley}.

We now explain how a binary matroid can be obtained from two smaller binary matroids
by gluing along a small separation. Before doing so, we need one more definition:
a {\em cycle} is a disjoint union of circuits in a binary matroid.
Let $M_1$ and $M_2$ be two binary matroids with ground sets $E_1$ and $E_2$, respectively.
Suppose that $\min\{|E_1|,|E_2|\}\le |E_1\triangle E_2|$ and that one of the following holds.
\begin{itemize}
\item $E_1$ and $E_2$ are disjoint.
\item $E_1\cap E_2$ contains a single element which is neither a loop nor a coloop in $M_1$ or $M_2$.
\item $E_1\cap E_2$ is a $3$-element circuit in both $M_1$ and $M_2$ and $E_1\cap E_2$ contains no cocircuit in $M_1$ or $M_2$.
\end{itemize}
We define $M=M_1\triangle M_2$ to be the binary matroid with ground set $E_1\triangle E_2$
such that a set $C\subseteq E_1\triangle E_2$ is a cycle in $M$ if and only if there
exist cycles $C_i$ in $M_i$, $i=1,2$, such that $C=C_1\triangle C_2$. Based on which of the three
cases apply, we say that $M$ is a {\em $1$-sum}, a {\em $2$-sum} or a {\em $3$-sum} of $M_1$ and $M_2$.
Note that if $M$ is a $k$-sum, $k=1,2,3$, of $M_1$ and $M_2$,
then the partition $(E_1\setminus E_2,E_2\setminus E_1)$ is a $k$-separation of $M$.

A matroid $M$ is {\em regular} if it is representable over any field.
It is well-known that a matroid $M$ is regular if and only if
it has a representation with a totally unimodular matrix (over rationals).
Seymour's Regular Matroid Decomposition Theorem~\cite{bib-seymour80}
asserts that every regular matroid can be obtained by a series of $1$-,
$2$- and $3$-sums of graphic matroids, cographic matroids and
copies of a particular $10$-element matroid referred as $R_{10}$.

\subsection{Matroid algorithms}
\label{subset-alg}

In this paper, we study efficient algorithms for matroids. Since this area involves several
surprising aspects, let us provide a brief introduction.
Since the number of non-isomorphic $n$-element matroids is double exponential in $n$,
an input matroid cannot be presented to an algorithm using subexponential space.
So, one has to resolve how an input matroid is given.
One possibility is that the input matroid is given by an oracle (a black-box function) that
returns whether a given subset of the ground set is independent.
If this is the case, we say that the matroid is {\em oracle-given}.
Other possibilities include giving the matroid by its representation over a field.
Since algorithms working with oracle-given matroids are the most general,
we assume in this paper that an input matroid is oracle-given unless stated otherwise.
The running time of algorithms is usually measured in the number of elements of an input matroid.
So, if the matroid is graphic, the running time is measured in the number of its edges.

Let us now mention some particular algorithmic results, some of which will be used later.
Though it is easy to construct a representation over $\GF(2)$ for an oracle-given binary matroid 
in quadratic time (under the promise that the input matroid is binary),
deciding whether an oracle-given matroid is binary cannot be solved in subexponential time~\cite{bib-seymour81}.
However, it can be decided in polynomial time whether an oracle-given matroid is graphic~\cite{bib-seymour81}.
Moreover, if the matroid is graphic, the corresponding graph can be found efficiently,
see, e.g., \cite[Chapter 10]{bib-truemper} for further details.

\begin{theorem}
\label{thm-graphicness}
There is a polynomial-time algorithm that decides whether an oracle-given matroid $M$ is graphic and,
if so, finds a graph $G$ such that $M=M(G)$.
\end{theorem}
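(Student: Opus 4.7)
The plan is to reduce graphicness testing for an oracle-given matroid to (a) extracting a candidate $\GF(2)$-representation via oracle queries, (b) running the classical graphicness algorithm on that representation to produce a candidate graph, and (c) verifying the output against the oracle. Throughout, $|E|$ denotes the size of the ground set and independence queries are the unit of cost.

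First I would use the oracle to compute a basis $B$ of $M$ by the standard greedy procedure, which uses $O(|E|)$ oracle calls. Then, for each $e\notin B$ and each $b\in B$, I would query whether $(B\setminus\{b\})\cup\{e\}$ is independent; by the definition of the fundamental circuit, $b\in C(e,B)$ iff that set is \emph{dependent}. This determines the full fundamental-circuit incidence matrix $D\in\GF(2)^{(E\setminus B)\times B}$ in $O(|E|^2)$ oracle calls. If $M$ is binary, then $[I\mid D]$ is a representation of $M$ over $\GF(2)$; if $M$ is graphic, then in addition $D$ encodes, row by row, the tree-paths in a spanning tree of a graph realizing $M$.

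Second, I would feed $D$ into the classical polynomial-time algorithm that, given a binary matrix, decides whether the binary matroid it represents is graphic and, if so, outputs a graph $G$ whose fundamental-circuit matrix with respect to some spanning tree equals $D$ (Bixby--Cunningham, Tutte; see Truemper~\cite{bib-truemper}). If this routine rejects, declare $M$ non-graphic. Otherwise, take the resulting graph $G$ (with $B$ as spanning tree) as the candidate.

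Third, I would verify $G$ against the oracle by checking, for every $e\notin B$ and every $b\in B$, that $(B\setminus\{b\})\cup\{e\}$ is independent in $M$ iff it is independent in $M(G)$. This is $O(|E|^2)$ oracle calls and certifies that $M$ and $M(G)$ share the basis $B$ and all its one-element exchanges; combined with the fact that $M(G)$ is binary and hence determined by $B$ together with its fundamental circuits, this forces $M=M(G)$ whenever the check passes. If some query disagrees, output ``not graphic.''

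The main obstacle is the middle step, namely the graph-reconstruction subroutine. Going from the binary data $D$ to an actual graph is the nontrivial content: one has to decide whether the cycle space spanned by the rows of $[I\mid D]$ is the cycle space of some graph and, if so, produce that graph. This is where Tutte's structural analysis of graphic matroids and the inductive splitter/$3$-sum decomposition in the Bixby--Cunningham algorithm do the real work; I would invoke them as a black box rather than reprove them. A secondary subtlety is the possibility that $M$ is not even binary: although binariness of an oracle-given matroid cannot be tested in subexponential time in general, it does not need to be tested separately here, because any non-binary $M$ will fail the verification step against the (necessarily binary) candidate $M(G)$.
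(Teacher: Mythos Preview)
The paper does not prove this theorem; it cites it as a known result (Seymour~\cite{bib-seymour81}, Truemper~\cite{bib-truemper}), so there is no in-paper argument to compare against.

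Your proposal, however, has a genuine gap in step (c). The verification you describe---checking $(B\setminus\{b\})\cup\{e\}$ for each $e\notin B$ and $b\in B$---recovers exactly the fundamental-circuit matrix $D$ already computed in step (a) and used to build $G$ in step (b). By construction $M(G)$ has fundamental-circuit matrix $D$ with respect to the spanning tree $B$, so this check passes automatically and carries no new information. In particular it does \emph{not} catch non-binary inputs, contrary to your last sentence.

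Concretely, take $M=U_{2,4}$ with basis $B=\{1,2\}$. Then $C(3,B)=\{1,2,3\}$ and $C(4,B)=\{1,2,4\}$, and the realization step returns the graph $G$ on vertices $a,b,c$ with tree edges $1=ab$, $2=bc$ and two parallel edges $3,4$ from $a$ to $c$. Every set $(B\setminus\{b\})\cup\{e\}$ is independent in both $M$ and $M(G)$, so your verification passes; yet $\{3,4\}$ is independent in $U_{2,4}$ and a circuit in $M(G)$. Your algorithm would therefore wrongly declare $U_{2,4}$ graphic. The underlying issue is that fundamental circuits with respect to a single basis determine a matroid only among \emph{binary} matroids, and binariness is precisely what cannot be tested with polynomially many oracle calls. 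Seymour's algorithm circumvents this not by a single post-hoc comparison of fundamental circuits but by an incremental construction with consistency checks tailored to graphicness at each stage.
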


Similarly, one cast test whether a given matroid is regular (also
see~\cite{bib-truemper-jctb} or \cite[Chapter 11]{bib-truemper}).

\begin{theorem}[Seymour~\cite{bib-seymour80}]
\label{thm-regular}
There is a polynomial-time algorithm that decides whether an oracle-given matroid $M$ is regular and,
if so, outputs a sequence containing graphic matroids, cographic matroids and the matroid $R_{10}$
along with the sequence $1$-, $2$- and $3$-sums yielding $M$.
\end{theorem}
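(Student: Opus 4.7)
The plan is to bootstrap from the oracle a candidate totally unimodular representation, verify that it indeed represents $M$, and then apply an iterative separation-finding procedure combined with Theorem~\ref{thm-graphicness} to realize the decomposition. To build the candidate representation, I would first greedily extract a base $B=\{b_1,\dots,b_r\}$ of $M$ by adding elements one at a time while the independence oracle says yes. For each non-base element $e\in E\setminus B$, the fundamental circuit $C(e,B)$ can be found by binary-search queries to the oracle (test which $b_i\in B$ can be removed so that $B\setminus\{b_i\}\cup\{e\}$ remains independent). These circuits form a $\{0,1\}$-matrix $A$ over $\GF(2)$ whose columns, together with the identity block corresponding to $B$, describe the binary matroid we hope equals $M$.

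The next step is to attempt to sign $A$ into a $\{0,\pm 1\}$-matrix that is totally unimodular. This is done by a greedy signing procedure (Camion's algorithm): start from an arbitrary signing of one row, and propagate signs along cycles of the bipartite graph of $A$'s nonzeros so that every short cycle in the matrix has an even number of negative entries. This procedure is polynomial and, if it fails to produce a consistent signing, $M$ is not regular. If it succeeds, one tests total unimodularity of the resulting matrix; this can be done in polynomial time (for instance, by the Seymour--Truemper recognition algorithm for TU matrices, which itself is essentially the decomposition we are ultimately after, but can be applied directly to the matrix). Finally, one must certify that the matroid represented by this TU matrix coincides with $M$: it suffices to verify for every $e\in E\setminus B$ that $C(e,B)$ matches the support of the corresponding column, which is a polynomial number of oracle calls.

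For the decomposition output, the heart of the argument is Seymour's Regular Matroid Decomposition Theorem, which I would use as a black box structurally and implement algorithmically as follows. Using the rank function (computable from the oracle), search exhaustively over all $k$-separations for $k=1,2,3$; there are only polynomially many candidate splits to test because a $k$-separation is determined by comparatively little data (one can enumerate small rank-balanced cuts via submodular-function techniques). If a $1$-, $2$-, or $3$-separation is found that certifies $M$ as a genuine sum $M_1\triangle M_2$, recurse on both sides; the recursion depth is linear in $|E|$. When no non-trivial separation is found, the piece is a ``brick'' of the decomposition and, by Seymour's theorem, must be graphic, cographic, or $R_{10}$. Test isomorphism to $R_{10}$ directly (it has $10$ elements); test graphicness by Theorem~\ref{thm-graphicness}; and test cographicness by applying Theorem~\ref{thm-graphicness} to the dual matroid $M^*$, whose independence oracle is easily simulated.

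The main obstacle is twofold. First, getting the structural decomposition right: Seymour's theorem provides the guarantee that the recursion terminates at recognizable bricks, but the algorithmic subtlety is showing that sufficiently ``good'' $3$-separations (those satisfying the cycle/cocircuit conditions in the definition of a $3$-sum) can be found in polynomial time whenever the piece is not already a brick. Second, testing total unimodularity efficiently is non-trivial and historically relied on the decomposition theorem itself; one therefore needs to be careful to avoid circular reasoning, either by using Truemper's direct signing-and-TU-test machinery, or by interleaving the representation test with the decomposition so that regularity is certified piece by piece as the separations are discovered.
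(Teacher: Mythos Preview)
The paper does not give a proof of Theorem~\ref{thm-regular}; it is quoted as a known result of Seymour (with pointers to Truemper for the algorithmic details) and used as a black box throughout. There is therefore nothing in the paper to compare your proposal against.

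As a standalone sketch, your proposal names the right ingredients of the Seymour--Truemper approach (fundamental circuits to build a binary matrix, Camion signing, TU testing, recursive separation-finding, and graphicness/cographicness tests on the bricks). Two steps, however, do not work as written. First, your verification step is vacuous: you propose to confirm that the TU-represented matroid equals $M$ by checking that each $C(e,B)$ matches the support of the corresponding column, but those columns were \emph{constructed} from $C(e,B)$, so the check passes automatically and certifies nothing. Agreement on fundamental circuits with respect to one base does not force two matroids to coincide unless both are already binary, and the paper itself notes (just above Theorem~\ref{thm-graphicness}) that binariness of an oracle-given matroid cannot be decided in subexponential time. The statement is in practice used under a promise that $M$ is regular, which is all the paper ever needs; your sketch does not address this subtlety.

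Second, ``search exhaustively over all $k$-separations'' is not polynomial as stated: there are exponentially many bipartitions of $E$, and a $3$-separation is certainly not ``determined by comparatively little data'' in any sense that makes enumeration cheap. Polynomial-time detection of $1$-, $2$- and $3$-separations is possible, but it requires genuine machinery (matroid-intersection arguments on the connectivity function, or the pivoting techniques in Truemper's algorithm), not brute force. Your parenthetical about submodular-function techniques points in the right direction but would have to be made precise for the argument to stand.
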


\subsection{Branch-width and local branch-width}
\label{subsect-bw}

Though the notion of branch-width is also defined for graphs, we will use it exclusively for matroids.
A {\em branch-decomposition} of a matroid $M=(E,\II)$ is a subcubic tree $T$
with leaves one-to-one corresponding to the elements of $M$. Each edge $e$ of $T$
splits $T$ into two subtrees and thus naturally partitions the elements of $M$ into two subsets $A$ and $B$,
those corresponding to the leaves of the two subtrees. The {\em width of the edge $e$} is $r(A)+r(B)-r(M)+1$
where $r$ is the rank function of $M$, and the {\em width of the branch-decomposition} is the maximum
width of its edge. The {\em branch-width} of $M$ is the minimum width of its branch-decomposition.

As mentioned earlier, the branch-width of the cycle matroid $M(G)$ of a graph $G$ is closely
related to the tree-width of $G$.
\begin{proposition}[\cite{bib-minorsX}]
\label{prop-bw+tw}
Let $G$ be a graph.
If $b$ is the branch-width of the cycle matroid $M(G)$ of $G$ and $t$ is the tree-width of $G$,
then $ b-1 \le t \le 3b/2$.
\end{proposition}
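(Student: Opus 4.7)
The plan is to establish the two inequalities separately by explicit conversion between the two kinds of decomposition, after reducing to the case when $G$ is connected (both tree-width and branch-width behave additively over connected components, so one may focus on each component in turn).

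For the easier bound $b \le t+1$, I would start from a tree-decomposition $(T, \{V_x\}_{x \in V(T)})$ of $G$ of width $t$ and construct a branch-decomposition of $M(G)$. For every edge $e=uv$ of $G$ I pick a node $x_e$ with $\{u,v\} \subseteq V_{x_e}$ (guaranteed by the first axiom of tree-decompositions) and attach a new leaf labelled $e$ to $x_e$; after suppressing degree-two nodes and splitting high-degree ones by routine surgery, one obtains a subcubic tree $T'$ whose leaves are in bijection with $E(G)$. For any edge $\varepsilon$ of $T'$ with induced partition $(A,B)$ of $E(G)$, the set $V(A) \cap V(B)$ is contained in $V_x \cap V_y$ for the nearest original nodes on either side, and hence has size at most $t+1$. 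Using the graphic rank identity $r(X) = |V(G)| - c(V(G), X)$ together with the connectedness of $G$, one verifies $r(A) + r(B) - r(M(G)) + 1 \le |V(A) \cap V(B)| \le t+1$, so the width of $\varepsilon$ is at most $t+1$.

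For the bound $t \le 3b/2$, I would start from a branch-decomposition $(T, \mu)$ of $M(G)$ of width $b$. For every tree-edge $\varepsilon$ with induced partition $(A_\varepsilon, B_\varepsilon)$ I wish to select a set $S_\varepsilon \subseteq V(G)$ of size at most $b$ that separates $A_\varepsilon$ from $B_\varepsilon$ in the sense that no component of $G - S_\varepsilon$ contains edges from both sides. For each internal node $x$ of $T$ with incident tree-edges $\varepsilon_1, \varepsilon_2, \varepsilon_3$, I then set the bag $V_x := S_{\varepsilon_1} \cup S_{\varepsilon_2} \cup S_{\varepsilon_3}$. The central combinatorial observation is that every vertex $v \in V_x$ lies in at least two of the three sets $S_{\varepsilon_i}$: a vertex belongs to $S_{\varepsilon_i}$ precisely when it is incident to edges in both subtrees meeting at $x$ across $\varepsilon_i$, and a vertex present in the union must be incident to edges in at least two of the three branches at $x$, forcing membership in two of the three $S_{\varepsilon_i}$. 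Hence $|V_x| \le \tfrac{1}{2}\bigl(|S_{\varepsilon_1}|+|S_{\varepsilon_2}|+|S_{\varepsilon_3}|\bigr) \le 3b/2$. The tree-decomposition axioms then follow from the way $S_\varepsilon$ propagates along $T$, ensuring that for every vertex the set of bags containing it induces a subtree of $T$.

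The main obstacle is producing the separators $S_\varepsilon$ of size at most $b$ simultaneously for every tree-edge. The naive choice $S_\varepsilon := V(A_\varepsilon) \cap V(B_\varepsilon)$ can exceed the matroid width $\lambda_{M(G)}(A_\varepsilon) + 1$ when the subgraphs induced by $A_\varepsilon$ and $B_\varepsilon$ have several components (as happens already for the middle edge of a path on four vertices, where $|V(A) \cap V(B)| = 2$ but $\lambda + 1 = 1$). To carry out the argument, I would therefore first preprocess the branch-decomposition so that for every tree-edge both sides induce connected subgraphs of $G$, which is equivalent to showing that the matroid branch-width of $M(G)$ coincides with the graph branch-width of $G$ for a connected graph with at least two edges. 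Once this cleaning step is in place, the $3/2$ factor arises from the elementary averaging argument above, and the rest of the verification is routine.
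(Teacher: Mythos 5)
The paper offers no proof of this proposition at all---it is imported wholesale from the literature---so your argument has to stand on its own. The first half does: hanging a leaf for each edge $e=uv$ off a node whose bag contains $\{u,v\}$, making the tree subcubic, and bounding the matroid connectivity of each displayed separation $(A,B)$ by $r(A)+r(B)-r(M(G))+1\le|V(A)\cap V(B)|\le t+1$ via the rank formula and the subtree property is the standard, correct proof of $b\le t+1$.

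The second half has a genuine gap, and it sits exactly where you wrote ``preprocess.'' Converting a branch-decomposition of $M(G)$ of width $b$ into one in which both sides of every displayed separation induce connected subgraphs, without increasing the width, is not a routine cleaning step: as you yourself note, it is equivalent to the statement that the branch-width of the graph $G$ (with the vertex-boundary connectivity function) equals the branch-width of the matroid $M(G)$. That equality was open for years after Graph Minors~X and is a theorem of Mazoit and Thomass\'e and, independently, of Hicks and McMurray; its proof is a delicate global argument about partitioning trees for a submodular connectivity function, not local surgery. The difficulty your path example already hints at is that reconnecting the two sides across one tree-edge can disconnect them across a neighbouring one, so there is no obvious termination or monotonicity for an edge-by-edge repair. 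Everything downstream of that step---taking $S_\varepsilon=V(A_\varepsilon)\cap V(B_\varepsilon)$, the observation that each vertex of a bag lies in at least two of the three incident boundaries, the resulting bound $|V_x|\le 3b/2$, and the interpolation property of the bags---is the routine Robertson--Seymour averaging argument, but it only bounds $t$ in terms of the \emph{graph} branch-width. So as written you have proved $b-1\le t$ completely and reduced $t\le 3b/2$ to a substantial theorem that you assume rather than prove. (A minor side remark: tree-width and branch-width combine over connected components by taking maxima, not additively.)
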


Similarly to graphs, the branch-width of a matroid cannot be efficiently computed.
However, for a fixed $k$, testing whether an oracle-given matroid
has branch-width at most $k$ can be done in polynomial time~\cite{bib-oum+}.
For matroids represented over a finite field, a cubic-time approximation algorithm
for the branch-width of a given matroid (for every fixed branch-width)
have been designed in~\cite{bib-hlineny05} and
the result have later been extended to a cubic-time algorithm deciding whether
a given matroid has branch-width at most $k$:
\begin{theorem}[Hlin{\v e}n{\'y} and Oum~\cite{bib-hlineny+oum}]
\label{thm-hlineny+oum}
Let $k$ be a fixed integer and $\FF$ a fixed finite field. There is a cubic-time algorithm that
decides whether the branch-width of a matroid $M$ represented over $\FF$ is at most $k$ and,
if so, it constructs a branch-decomposition of $M$ with width at most $k$.
\end{theorem}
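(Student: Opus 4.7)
The plan is to combine a known cubic-time approximation algorithm for matroid branch-width with a refinement step that converts an approximate decomposition into an exact one. The overall structure mirrors Bodlaender's approach to tree-width in graphs: first quickly obtain a decomposition whose width is bounded by some function $g(k)$, then use that decomposition as a scaffold for deciding exact branch-width $\le k$ and for constructing an optimal one.

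For the first step, I would invoke Hlin{\v e}n{\'y}'s cubic-time approximation from \cite{bib-hlineny05}, which either certifies that the branch-width of $M$ exceeds $k$ (in which case we output ``no'' and stop) or returns a branch-decomposition of width at most $3k$ (or some similar linear function of $k$). Since the matroid $M$ is represented over the fixed finite field $\FF$, the approximation runs in time cubic in $|E(M)|$, and from this point on we may work with a matroid of branch-width at most $g(k)$ together with an explicit witnessing decomposition.

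The heart of the proof is the refinement step. My approach would be to express the property ``$M$ has a branch-decomposition of width at most $k$'' in monadic second order logic over $M$ (interpreted on the ground set, with the rank function encoded through the representation over $\FF$), and then apply Hlin{\v e}n{\'y}'s result~\cite{bib-hlineny06} that monadic second order properties can be decided in cubic time on matroids of bounded branch-width represented over a fixed finite field. To extract an actual decomposition rather than a mere yes/no answer, I would use a self-reduction driven by the approximate decomposition: at each internal edge of the scaffold, test with the decision procedure whether both sides of the induced separation admit a completion to a width-$k$ branch-decomposition with a prescribed boundary behaviour, and recurse.

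The main obstacle is the monadic-second-order-expressibility of ``branch-width at most $k$''. A branch-decomposition is a subcubic tree whose leaves correspond bijectively to the elements of $M$, and such an unbounded tree structure cannot be quantified over directly by set variables ranging over subsets of $E$. I would circumvent this by using the equivalent characterization via families of separations of order less than $k$ arising from a decomposition, or dually via the non-existence of a tangle of order $k+1$; such objects only refer to ranks of boundedly many subsets at a time and are therefore definable in monadic second order logic using the $\FF$-representation. Alternatively, and perhaps more directly, I would perform dynamic programming on the approximate decomposition itself, maintaining at each edge the collection of possible ``boundary types'' of width-$k$ completions of each side; the number of such types depends only on $k$ and $|\FF|$, which yields a cubic-time algorithm that also permits reconstruction of an optimal decomposition when one exists.
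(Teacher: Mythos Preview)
The paper does not prove this theorem at all: it is stated with attribution to Hlin{\v e}n{\'y} and Oum~\cite{bib-hlineny+oum} and used as a black box. There is therefore nothing in the present paper to compare your argument against; you have written a proof sketch for a result that the authors simply import from the literature.

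That said, your sketch follows the correct high-level architecture of the actual Hlin{\v e}n{\'y}--Oum proof: first run the cubic-time approximation of~\cite{bib-hlineny05} to obtain either a certificate that the branch-width exceeds $k$ or a decomposition of width bounded by a function of $k$, and then refine. Where your write-up is thin is exactly where the real work lies. The MSO route you propose is delicate: a branch-decomposition is a tree on $|E(M)|$ leaves, and neither the tree nor a tangle of order $k+1$ is obviously encodable by boundedly many subsets of $E(M)$; your appeal to ``families of separations'' or ``non-existence of a tangle'' does not by itself yield an MSO sentence, since a tangle is a selection from \emph{all} low-order separations. The dynamic-programming alternative you mention at the end is the sound one and is closer to what is actually carried out: one processes the approximate decomposition bottom-up, recording for each displayed subspace a bounded set of ``partial decomposition types'' determined by the subspace spanned in $\FF^r$, whose number depends only on $k$ and $|\FF|$. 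If you want this to stand as a proof, commit to that route and specify the invariant maintained at each edge of the scaffold; the MSO paragraph as written is not a proof but a hope.
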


One obstacle that needs to be overcome when defining local branch-width is the absence of 
the natural notion of distance for matroids. We overcome this in a way similar to that
used in the definition of matroid connectivity from Subsection~\ref{subsect-matroid},
i.e., we utilize containment in a common circuit.
Let $M$ be a matroid. The {\em distance} of two elements $e$ and $f$ of $M$
is equal to the minimum size of a circuit of $M$ containing both $e$ and $f$;
if $e=f$, then the distance of $e$ and $f$ is equal to zero.
Proposition~\ref{prop-2circuits} implies that
the distance function defined in this way satisfies the triangle-inequality.
For an element $e$ of $M$ and
a positive integer $d$, $N^M_d(e)$ denotes the set containing all elements of $M$ at distance at most $d$ from $e$,
i.e., those elements $f$ such that there is a circuit of $M$ containing both $e$ and $f$ of size at most $d$.
A class $\MM$ of matroids has {\em locally bounded branch-width}
if there exists a function $f:\NN\to\NN$ such that for every matroid $M\in\MM$,
every element $e$ of $M$ and every positive integer $d$,
the branch-width of the matroid $M$ restricted to the elements of $N^M_d(e)$ is at most $f(d)$.

\subsection{FOL and MSOL properties}
\label{subsect-logic}

A {\em monadic second order logic formula} $\psi$ for a graph property
is built up in the usual way from variables for vertices, edges,
vertex and edge sets, the equality and element-set signs $=$ and $\in$,
the logic connectives $\land$, $\lor$ and $\neg$,
the quantification over variables and the predicate
indicating whether a vertex is incident with an edge;
this predicate encodes the input graph. A {\em first order logic formula}
is a monadic second order logic formula with no variables for vertex and
edge sets. A formula with no free variables is called a {\em sentence}.
We call a graph property is {\em first order (monadic second order) logic property}
if there exists a first order (monadic second order) logic sentence
describing the property, respectively. We further abbreviate
first order logic to FOL and monadic second order logic to MSOL.

Similarly, we define first order logic formulas for matroids.
Such formulas contain variables for matroid elements and
the predicate indicating whether a set of elements is independent;
the predicate encodes the input matroid.
Monadic second order logic formulas contain variables for elements and
their sets and the predicate describing the independent set of matroids, and
first order logic formulas do not contain variables for sets of elements.
Analogously, first order and monadic second order logic matroid properties
are those that can be described by first order and monadic second order logic sentences.
An example of a first order logic property is the presence of a fixed matroid $M_0$ in an input matroid,
i.e., the existence of a restriction of the input matroid isomorphic to $M_0$.
For instance, if $M_0$ is the uniform matroid $U_{2,3}$ and $\indep_M$ the predicate
describing the independence in an input matroid $M$, then the corresponding
first order logic sentence is
$$\exists a,b,c; \indep_M(\{a,b\})\land \indep_M(\{a,c\})\land \indep_M(\{b,c\})\land\neg \indep_M(\{a,b,c\})\;\mbox{.}$$

As indicated in Table~\ref{table}, monadic second order logic graph properties
can be decided in linear time in classes of graphs with bounded tree-width~\cite{bib-courcelle90},
monadic second order logic matroid properties can be decided in cubic time
in classes of matroids with bounded branch-width that are given by a representation
over a fixed finite field~\cite{bib-hlineny06} and first order logic graph properties
can be decided in almost linear time in classes of graphs with locally bounded tree-width~\cite{bib-frick+}.
Theorem~\ref{thm-main} completes the picture by establishing that
first order logic matroid properties are fixed parameter tractable
in classes of oracle-given regular matroids with locally bounded branch-width.

The aspect of FOL properties exploited in~\cite{bib-frick+} to obtain their algorithmic results
is the locality of FOL properties. Let $P_1,\ldots,P_k$ be relations on a set $E$ and
let $r_i$ be the arity of $P_i$, $i=1,\ldots,k$.
The {\em Gaifman graph} for the relations $P_1,\ldots,P_k$ is the graph with vertex set $E$ such that
two elements $x_1$ and $x_2$ of $E$ are adjacent if there exist an index $i\in\{1,\ldots,k\}$ and
elements $x_3,\ldots,x_{r_k}$ of $E$ such that $(x_1,\ldots,x_{r_k})\in P_i$.
A formula $\psi[x]$ is {\em $r$-local} if all quantifiers in $\psi$ are restricted
to the elements at distance at most $r$ from $x$ in the Gaifman graph.
Gaifman established the following theorem which captures
locality of FOL properties.

\begin{theorem}[Gaifman~\cite{bib-gaifman}]
\label{thm-gaifman}
If $\psi_0$ is a first order logic sentence with predicates $P_1,\ldots,P_k$,
then the sentence $\psi_0$ is equivalent to a Boolean combination of sentences of the form
\begin{equation}
\exists x_1,\ldots,x_k \left(\bigwedge_{1\le i<j\le k}d(x_i,x_j)>2r \land \bigwedge_{1\le i\le k}\psi[x_i]\right)\label{eq-gaifman}
\end{equation}
where $d(x_i,x_j)$ is the distance between $x_i$ and $x_j$ in the Gaifman graph and
$\psi$ is an $r$-local first order logic formula for some integer $r$.
\end{theorem}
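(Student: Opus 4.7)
The plan is to prove the theorem by induction on the quantifier rank of $\psi_0$, strengthening the claim to the following: every FOL formula $\phi(x_1,\ldots,x_m)$ is equivalent to a Boolean combination of $r$-local formulas in its free variables together with basic local sentences of the form in equation~(\ref{eq-gaifman}), where the radius $r$ is a function of the quantifier rank. Atomic formulas are trivially $1$-local, so the base case is immediate, and Boolean connectives preserve the target form by distribution. Universal quantification reduces to the existential case by $\forall = \neg \exists \neg$, so the whole inductive burden sits on the existential step.

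For the existential step, suppose $\psi = \exists y\, \phi(y,x_1,\ldots,x_m)$ where $\phi$ is already in the target form. Put $\phi$ in disjunctive normal form and push $\exists y$ over the disjuncts; basic local sentences contain no free variables and can be factored out of each disjunct. It therefore suffices to analyze $\exists y\, \alpha(y,x_1,\ldots,x_m)$ where $\alpha$ is an $r$-local formula. One case-splits according to the position of the witness $y$: either $y$ lies within distance $2r$ of some $x_i$, or it is at distance greater than $2r$ from all $x_i$. The first case can be absorbed into a local formula around the corresponding $x_i$ at an enlarged radius of roughly $3r$, which preserves the form.

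The heart of the argument is the scattered case. The key technical lemma is that if the $r$-neighborhoods of $y, x_1, \ldots, x_m$ are pairwise disjoint in the Gaifman graph, then the truth of $\alpha(y,\bar{x})$ depends only on the isomorphism types of these neighborhoods taken separately, so $\alpha$ is equivalent within the disjointness region to a Boolean combination of single-variable $r$-local formulas. I would establish this decomposition via an Ehrenfeucht--Fra\"iss\'e game argument: a play on the union of disjoint $r$-balls reduces to independent plays on each ball, because any move outside the current ball cannot be connected back within $r$ steps. After this factorization, the scattered existential over $y$ becomes a statement of the shape ``there exists a witness of a local formula $\gamma$ that is far from every $x_i$,'' and combining this with the factorized conjuncts attached to each $x_i$ yields both additional local formulas in the free variables $x_i$ and, when all free variables are eventually quantified away, basic local sentences of the form~(\ref{eq-gaifman}).

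The main obstacle is the locality decomposition lemma for the scattered case, together with the careful radius bookkeeping: each layer of quantification can triple the locality radius, and one must verify that iterating the construction keeps the resulting Boolean combinations in the target normal form. A secondary subtlety is the step where, at the sentence level, scattered existentials accumulate into the $k$-fold pattern $\exists x_1,\ldots,x_k \bigwedge d(x_i,x_j) > 2r$; this requires showing that the class of basic local sentences of the form~(\ref{eq-gaifman}) is closed under the operation of prepending a new scattered witness, which follows by renaming variables and taking $r$ to be the maximum of the radii involved.
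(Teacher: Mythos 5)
First, note that the paper contains no proof of Theorem~\ref{thm-gaifman}: it is quoted from Gaifman's 1982 paper and used as a black box, so there is nothing in the text to compare your argument against. Your sketch is the standard route to Gaifman's locality theorem---induction on quantifier rank with a strengthened hypothesis for formulas with free variables, the near/far case split on the existential witness, and the factorization of a local formula over independent neighborhoods via an Ehrenfeucht--Fra\"{\i}ss\'e game---and that architecture is sound.

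There is, however, one genuine gap, and it sits exactly where you are most casual. After the factorization, the scattered case leaves you with a condition of the form ``there exists $y$ with $d(y,x_i)>2r$ for all $i$ such that $\gamma(y)$ holds,'' where $\gamma$ is a one-variable local formula. This is neither a local formula in the free variables $x_1,\ldots,x_m$ nor a basic local sentence, and it is not disposed of by ``renaming variables and taking $r$ to be the maximum of the radii involved.'' The step that actually closes the induction is a counting argument: one cases on the largest $e$ such that there exist $e$ pairwise sufficiently scattered witnesses of $\gamma$ (each such threshold condition is precisely a basic local sentence of the form~(\ref{eq-gaifman})). If $e>m$, then by pigeonhole some witness is far from all of $x_1,\ldots,x_m$, since each $x_i$ can be close to at most one member of a sufficiently scattered set; if $e\le m$, then every witness of $\gamma$ lies within bounded distance of a maximal scattered set, and the existence of a far witness collapses to a local formula around $\bar{x}$ of larger radius. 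Without this pigeonhole step the basic local sentences never actually arise and the induction does not close. A secondary, fixable issue: for the Ehrenfeucht--Fra\"{\i}ss\'e factorization you need the $r$-neighborhoods to be pairwise non-adjacent in the Gaifman graph (so that their union is an honest disjoint union of structures), not merely pairwise disjoint; $d(x_i,x_j)>2r$ still permits an edge between $N_r(x_i)$ and $N_r(x_j)$, so the radii in the scatteredness conditions must be tracked slightly more generously than your sketch indicates.
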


In particular, Theorem~\ref{thm-gaifman} reduces deciding FOL sentences to $r$-local formulas $\psi$.

\section{Local tree-width vs.~local branch-width}
\label{sect-ltw+lbw}

In this section, we relate graph classes with locally bounded tree-width and
matroid classes with locally bounded branch-width.

\begin{theorem}
\label{thm-ltw+lbw}
Let $\GG$ be a class of graphs and $\MM$ the class of cycle matroids
$M(G)$ for graphs $G\in\GG$. If the class $\GG$ has locally bounded tree-width,
then $\MM$ has locally bounded branch-width.
\end{theorem}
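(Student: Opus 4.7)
The plan is to translate matroid-theoretic distance in $M(G)$ back into graph-theoretic cycle length in $G$, and then combine the local tree-width hypothesis on $\GG$ with Proposition~\ref{prop-bw+tw}.

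Fix $M = M(G) \in \MM$, an element $e \in E(G)$, and a positive integer $d$. If $e$ is a loop of $G$, then $\{e\}$ is the unique circuit containing $e$, so $N^M_d(e) = \{e\}$ and the corresponding restriction has a trivially bounded branch-width; so assume $e = uv$ with $u\ne v$. The first step is the observation that the circuits of $M(G)$ are exactly the edge sets of cycles of $G$, so the matroid distance from $e$ to another edge $f = xy$ equals the length of a shortest cycle of $G$ passing through both $e$ and $f$. Consequently, for every $f \in N^M_d(e)$ there is such a cycle of length at most $d$, and traversing it from $u$ shows that both $x$ and $y$ lie in $N^G_d(u)$. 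Thus every vertex incident to an edge of $N^M_d(e)$ belongs to $N^G_d(u)$.

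In the second step, I would let $G'$ denote the subgraph of $G$ with edge set $N^M_d(e)$. Because a subset $X \subseteq E(G)$ is independent in $M(G)$ iff it spans no cycle of $G$, one checks directly that $M(G)|_{N^M_d(e)} = M(G')$. By the previous paragraph, after discarding isolated vertices $G'$ is a subgraph of $G[N^G_d(u)]$; since tree-width is monotone under taking subgraphs and unaffected by isolated vertices, the local tree-width hypothesis on $\GG$ (with witnessing function $f$) gives that the tree-width of $G'$ is at most $f(d)$. Proposition~\ref{prop-bw+tw} then yields that the branch-width of $M(G') = M|_{N^M_d(e)}$ is at most $f(d) + 1$, so $f'(d) = f(d) + 1$ witnesses that $\MM$ has locally bounded branch-width.

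I do not anticipate any serious obstacle: all the nontrivial content sits inside Proposition~\ref{prop-bw+tw}, while the argument itself requires only three small verifications from the definitions, namely (i) matroid distance in $M(G)$ coincides with shortest-cycle length in $G$, (ii) restriction of $M(G)$ to a set of edges equals the cycle matroid of the corresponding edge-subgraph, and (iii) the vertex set of $G'$ is contained in $N^G_d(u)$. The only place where one must take a little care is the handling of degenerate situations (loops in $G$, parallel edges, and isolated vertices), but in each case the claim is either trivial or absorbed by the monotonicity of tree-width.
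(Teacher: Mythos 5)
Your proof is correct and follows essentially the same route as the paper's: translate matroid distance into cycle length, observe that the edges of $N^M_d(e)$ span a subgraph of a bounded graph-neighborhood of an endpoint of $e$, invoke the local tree-width hypothesis, and finish with Proposition~\ref{prop-bw+tw}. The only (immaterial) difference is that the paper uses the sharper observation that a cycle of length at most $d$ keeps all its vertices within distance $\lfloor d/2\rfloor$ of $v$, yielding the witnessing function $f(\lfloor d/2\rfloor)+1$ rather than your $f(d)+1$.
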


\begin{proof}
Assume that $\GG$ has locally bounded tree-width, and
let $f_G$ be the function for $\GG$ from the definition of a class of
locally bounded tree-width. We claim that the function $f_M(d):=f_G(\lfloor d/2\rfloor)+1$
certifies that $\MM$ has locally bounded branch-width.

Let $G$ be a graph of $\GG$ and $M=M(G)$.
Fix an element $e$ of $M$ and an integer $d\ge 1$.
We show that the branch-width of $M$ restricted to $N^M_d(e)$
is at most $f_M(d)$. Let $v$ and $w$ be the end vertices of $e$ in $G$.
If $f\in N^M_d(e)$, then the edges $e$ and $f$ are contained in a circuit
with at most $d$ elements of $M$. This circuit is
a cycle of $G$ of length at most $d$ and
thus the distance of both the end-points of $f$ from $v$
is at most $\lfloor d/2\rfloor$. In particular, both the end-vertices
of all the edges $f\in N^M_d(e)$ are contained in $N^G_{\lfloor d/2\rfloor}(v)$.

By the definition of a graph class of locally bounded tree-width,
the tree-width of the subgraph of $G$ induced by $N^G_{\lfloor d/2\rfloor}(v)$
is at most $f(\lfloor d/2\rfloor)$. Hence, the subgraph $H$ of $G$
comprised of the edges from $N^M_d(e)$ has tree-width at most
$f(\lfloor d/2\rfloor)$ as $H$ is contained in the subgraph of $G$
induced by $N^G_{\lfloor d/2\rfloor}(v)$. Since $M$ restricted
to $N^M_d(e)$ is the graphic matroid corresponding to $H$,
the branch-width of $M$ restricted to $N^M_d(e)$
is at most $f_G(\lfloor d/2\rfloor)+1$ by Proposition~\ref{prop-bw+tw}.
We conclude that $f_M$ certifies that $\MM$ is a class of matroids
with locally bounded branch-width.
\end{proof}

The converse of Theorem~\ref{thm-ltw+lbw} is not true;
this does however not harm the view of our results as a generalization
of the result of Frick and Grohe on graph classes with locally bounded tree-width
since our result apply to classes of graphic matroids corresponding to graph classes
with locally bounded tree-width (and, additionally, to several other matroid classes).

Let us give an example witnessing that the converse implication does not hold.
Consider a graph $G_k$ obtained in the following way: start with the complete
graph $K_k$ of order $k$, replace each edge with a path with $k$ edges and
add a vertex $v$ joined to all the vertices of the subdivision of $K_k$.
Clearly, the class of graphs $\GG=\{G_k,k\in\NN\}$ does not have locally bounded
tree-width since the subgraph induced by $N^G_1(v)$ in $G_k$ contains
$K_k$ as a minor and thus its tree-width is at least $k-1$.

Let $\MM$ be the class of graphic matroids of graphs $G_k$, and
set $f_M(d)=\max\{3,d+1\}$. We claim that the function $f_M$ witnesses that
the class $\MM$ has locally bounded branch-width.
Let $M=M(G_k)$ and fix an element $e$ of $M$ and an integer $d\ge 3$;
let $G_k$ be the graph corresponding to $M$.
If $k<d$, then the tree-width of $G_k$ is at most $k+1\le d$ and
thus the branch-width of $M$ is at most $d+1$ by Proposition~\ref{prop-bw+tw}.
In particular, the branch-width of $M$ restricted to any subset of its elements
is at most $d+1$.

We now assume that $k\ge d$. Let $w$ be a vertex incident with $e$ that
is different from the vertex $v$. Observe that any edge contained in $N^M_d(e)$
is incident with a vertex of the subdivision of $K_k$ at distance at most $d-2<k$ from $w$.
Since $k\ge d$, the edges in $N^M_d(e)$ contained in the subdivision of $K_k$
form no cycles in $G_k$ and thus they yield a subgraph of tree-width one.
Adding the extra vertex $v$ increases the tree-width by at most one and
thus the tree-width of the subgraph of $G_k$ formed by the edges of $N^M_d(e)$ is at most two;
consequently, the matroid $M$ restricted to $N^M_d(e)$ has branch-width at most three.

\section{Constructing Gaifman graph}
\label{sect-gaifman}

Let $\varphi$ be a first order logic sentence which contains the predicate $\indep_M$ encoding an input matroid $M$.
Further, let $d$ be the depth of quantification in $\varphi$.
Introduce new predicates $C^1_M,\ldots,C^d_M$ such that the arity of $C^k_M$ is $k$ and
$C^k_M(x_1,\ldots,x_k)$ is true if and only if $\{x_1,\ldots,x_k\}$ is a circuit of $M$.
For instance, $C^1(x_1)$ is true if and only if $x_1$ is a loop.
For an oracle-given matroid $M$, each predicate $C^k_M(x_1,\ldots,x_k)$, $k=1,\ldots,d$,
can be evaluated in linear time. The {\em circuit reduction} $\varphi^C$ of $\varphi$ is then the sentence
obtained from $\varphi$ by replacing every $\indep_M(x_1,\ldots,x_{\ell})$ in $\varphi$
by the conjunction:
$$\bigwedge_{1\le k\le\ell}\quad\bigwedge_{1\le j_1<\cdots<j_k\le\ell}\neg C^k_M(x_{j_1},\ldots,x_{j_k})\;\mbox{.}$$

Observe that a sentence $\varphi$ is satisfied for $M$ if and only if its circuit reduction $\varphi^C$ is.
In our algorithmic considerations, we prefer working with circuit reductions of sentences
since their Gaifman graphs are better-structured and, more importantly, distances in them are related
to distances in matroids.
Let $G^C_{M,d}$ be the graph with vertex set being the elements of $M$ and two elements of $M$
are adjacent if they are contained in a circuit of length at most $d$, i.e., their distance in $M$
is at most $d$.
Clearly,
the graph $G^C_{M,d}$ is the Gaifman graph for the circuit reduction of $\varphi$ and the matroid $M$
when $d$ is the depth of quantification of $\varphi$.
Proposition~\ref{prop-2circuits} yields the following (the bound $d\ell$ is not tight).

\begin{lemma}
\label{lm-gaifman-distance}
Let $\varphi$ be a first order logic sentence and $d$ an integer.
If two elements $x$ and $y$ in $M$ are at distance $\ell$ in the graph $G^C_{M,d}$,
then there exists a circuit of $M$ containing both $x$ and $y$ with length at most $d\ell$.
In particular, the distance of $x$ and $y$ in $M$ is at most $d\ell$.
\end{lemma}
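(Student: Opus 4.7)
The plan is to prove the lemma by induction on $\ell$, the distance of $x$ and $y$ in the Gaifman graph $G^C_{M,d}$, using Proposition~\ref{prop-2circuits} as the main tool for concatenating circuits.

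For the base case $\ell=1$, the two elements $x$ and $y$ are adjacent in $G^C_{M,d}$, which by the very definition of this graph means that $x$ and $y$ are contained in a common circuit of $M$ of size at most $d$. This gives the bound $d\cdot 1 = d$ immediately.

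For the inductive step, suppose $x$ and $y$ are at distance $\ell\ge 2$ in $G^C_{M,d}$ and let $x=z_0,z_1,\ldots,z_\ell=y$ be a shortest path between them in this graph. Then $z_{\ell-1}$ is at distance $\ell-1$ from $x$ and at distance $1$ from $y$ in $G^C_{M,d}$. By the induction hypothesis, there exists a circuit $C_1$ of $M$ containing both $x$ and $z_{\ell-1}$ with $|C_1|\le d(\ell-1)$, and by the base case there exists a circuit $C_2$ of $M$ containing both $z_{\ell-1}$ and $y$ with $|C_2|\le d$. Since $z_{\ell-1}\in C_1\cap C_2$, the two circuits have non-empty intersection, so Proposition~\ref{prop-2circuits} applied to $C_1,C_2$ and the elements $x,y\in C_1\cup C_2$ yields a circuit $C\subseteq C_1\cup C_2$ with $\{x,y\}\subseteq C$. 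Then
\[
|C|\le |C_1\cup C_2|\le |C_1|+|C_2|\le d(\ell-1)+d=d\ell\;\mbox{,}
\]
which completes the induction. The ``in particular'' statement about distances in $M$ is then immediate from the definition of distance.

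No step looks like a real obstacle; the main point is just to invoke Proposition~\ref{prop-2circuits} in the inductive step, noting that it applies precisely because consecutive circuits along a Gaifman path share their intermediate endpoint. The author's parenthetical remark that ``the bound $d\ell$ is not tight'' is consistent with this proof, since $C_1$ and $C_2$ share at least one element, so one could improve the bound to $d\ell-(\ell-1)$, but this is not needed for the applications.
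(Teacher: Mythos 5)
Your proof is correct and matches the paper's intent: the paper gives no explicit proof, merely remarking that Proposition~\ref{prop-2circuits} yields the lemma (and that the bound is not tight), and your induction along a shortest Gaifman path, gluing consecutive circuits via that proposition, is exactly the argument being alluded to.
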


In the rest of this section,
we focus on constructing the graph $G^C_{M,d}(\varphi)$ efficiently for an oracle-given regular matroid $M$ and
a first order logic sentence $\varphi$. To achieve our goal, we will utilize Seymour's Regular Matroid Decomposition Theorem.
To use this result, we show that the problem, which we call Minimum dependency weight circuit,
is fixed-parameter tractable for graphic and cographic matroids.
In this problem, elements of a matroid are assigned weights and we seek a circuit
containing given elements with minimum weight; however, if the circuit contains
two elements from one of special $3$-element circuits,
then the weight of the sought circuit gets adjusted (this we refer to as dependency).
The problem is formally defined as follows.
\begin{center}
\begin{tabular}{ll}
Problem: & {\bf Minimum dependency weight circuit (MDWC)} \\ \\
Parameter: & an integer $\ell$ \\ \\
Input: & a matroid $M$ with a set $F$ of one, two or three elements of $M$\\
       & a collection $\TT$ of disjoint $3$-element circuits of $M$\\
       & a positive integral function $w$ on $E(M)\setminus\bigcup_{T\in\TT}T$\\
       & non-negative integral functions $w_T$ on $2^T$, $T\in\TT$, such that\\
       & $w_T(\emptyset)=0$ and $w_T(A)\ge|A|$ for any non-empty $A\subseteq T$\\ \\
Output: & the minimum weight $w(C)$ of a circuit $C$ with $F\subseteq C$\\
        & where $w(C)=\sum\limits_{e\in C\setminus\bigcup\limits_{T\in\TT}T} w(e)+\sum\limits_{T\in\TT} w_T(C\cap T)$\\
	& if such minimum weight is at most $\ell$
\end{tabular}
\end{center}

To establish the fixed parameter tractability of MDWC problem for regular matroids,
we start with graphic matroids which are more straightforward to handle than cographic ones.

\begin{lemma}
\label{lm-cycle}
Minimum dependency weight circuit problem is fixed parameter tractable
in the class of graphic matroids.
\end{lemma}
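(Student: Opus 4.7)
The plan is to reduce the MDWC problem on a graphic matroid to a bounded-length minimum-weight-cycle problem on a graph and to solve the latter by colour coding. First I would apply Theorem~\ref{thm-graphicness} to recover in polynomial time a graph $G$ with $M(G)=M$; circuits of $M$ are then precisely the cycles of $G$, the input set $F$ becomes a set of at most three edges of $G$, and the triangles in $\TT$ are pairwise vertex-disjoint triangles of $G$. The positivity assumptions $w(e)\ge 1$ and $w_T(A)\ge|A|$ guarantee that any cycle $C$ with $w(C)\le\ell$ consists of at most $\ell$ edges and hence spans at most $\ell$ vertices.

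Next I would apply colour coding: colour the vertices of $G$ uniformly at random with $\ell$ colours, so that any fixed cycle on at most $\ell$ vertices is \emph{colourful} (its vertices receive pairwise distinct colours) with probability at least $\ell!/\ell^\ell\ge e^{-\ell}$. For one such colouring I would run a dynamic program whose state is a triple $((u,v),S,F')$ consisting of a directed edge of $G$, a set $S$ of colours already used, and a subset $F'\subseteq F$ of specified edges already placed; the stored value is the minimum weight of a colourful path that ends with the edge $(u,v)$, uses exactly the colours in $S$, and contains exactly the edges of $F'$. Each transition appends an edge $(v,w)$ whose endpoint has a fresh colour and charges either $w((v,w))$ when $(v,w)$ lies in no triangle, or $w_T(\{(v,w)\})$ when $(v,w)\in T$ but $(u,v)\notin T$, or the upgrade $w_T(\{(u,v),(v,w)\})-w_T(\{(u,v)\})$ when both $(u,v)$ and $(v,w)$ lie in the same triangle $T$. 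The cycle is closed by initialising the DP at an arbitrary edge $e_0\in F$ and reading off entries that return to the other endpoint of $e_0$ with $F'=F$; the degenerate case in which $C$ coincides with a triangle of $\TT$ is handled separately in constant time per triangle.

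The main technical point, which is where the non-additivity of $w_T$ has to be confronted, is the following observation: if a cycle $C$ uses two edges of a triangle $T$, these edges share the common vertex $v_k$ of $T$, which has exactly two $C$-incident edges, so the two used edges of $T$ are forced to be consecutive along $C$. Consequently each triangle of $\TT$ contributes to $w(C)$ in a single consecutive block (of length one or two), and the last-edge slot of the DP state is exactly what is needed to recognise such a block and to charge the correct increment. Without this observation one would apparently have to track all partially used triangles simultaneously in the state, which would make the state space too large.

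Finally I would replace the random colouring by a perfect hash family of $V(G)$ onto $\{1,\dots,\ell\}$ of size $2^{O(\ell)}\log n$, obtaining a deterministic algorithm running in time $2^{O(\ell)}\cdot\mathrm{poly}(n)$, which is FPT in $\ell$ as required.
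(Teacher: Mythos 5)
Your proposal is correct and takes essentially the same route as the paper's proof: both reduce to finding a minimum-weight short cycle in the graph $G$ with $M=M(G)$, both use that $w(e)\ge 1$ and $w_T(A)\ge|A|$ bound the cycle length by $\ell$ and that two edges of a triangle of $\TT$ lying on a cycle (or path) must be consecutive, and both combine a last-edge-tracking dynamic program over colorful paths with a perfect hash family for derandomization; the paper merely organizes the DP along a fixed cyclic sequence of colors rather than over subsets of used colors. The one detail worth adding explicitly is the weight adjustment when the closing edge and the initial edge $e_0$ lie in a common triangle of $\TT$, which your state already supports.
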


\begin{proof}
The statement follows by a simple application of
the technique called color coding introduced in~\cite{bib-alon94}.
For completeness, we overview the method.

Let $G$ be the graph corresponding to an input matroid $M$ and
$f_1,\ldots,f_m$ edges corresponding to the elements of $F$.
Suppose that the vertices of $G$ are colored with $\ell$ colors and
our goal is to find a rainbow cycle $C$ containing all $f_1,\ldots,f_m$
with its weight $w(C)$ at most $\ell$. Here,
a rainbow cycle stands for a cycle with vertices of mutually distinct colors.
Such a cycle can be found in time $2^{O(\ell\log \ell)}O(|E(G)|^2)$ as
we explain in the next paragraphs.

Note that if the weight $w(C)$ of a cycle $C$ is at most $\ell$,
then the cycle has length at most $\ell$; this follows from the facts that
each element of $C$ not contained in a triple $T$ of $\TT$ contributes
at least one to the weight of $C$ and each triple $T$ contributes at least
$|C\cap T|$ since $w_T(C\cap T)\ge|C\cap T|$. So, it is enough to examine
cycles of length at most $\ell$.

Now, let us try to find a rainbow cycle of length exactly $k$, $2\le k\le\ell$.
Fix a cyclic sequence $\sigma$ of $k$ distinct colors.
We now find a cycle containing $f_1,\ldots,f_m$ with the colors of vertices given
by $\sigma$ of minimum weight (if such a cycle exists).
To find such a cycle,
first check whether the colors of the end-vertices of the edges of $F$
appear consecutively in $\sigma$; if not, such a cycle does not exist.
By symmetry, we can assume that $\sigma_1$ and $\sigma_k$ are colors of the end-vertices of $f_1$;
let $v_1$ be the end vertex of $f_1$ with the color $\sigma_1$ and $v_k$ the other end vertex.
Further, let $i_2$ ($i_3$) be the larger index of the color of an end vertex of $f_2$ ($f_3$, respectively).
If $m\le 2$, set $i_3$ to $\infty$; similarly, if $m=1$, set $i_2$ to $\infty$.

Let $V_i$ be the set of the vertices with color $\sigma_i$, $i=1,\ldots,k$ and set $V_0=V_k$.
Sequentially for $i=1,\ldots,k$, we compute for each edge $f$ between $V_{i-1}$ and $V_i$
the minimum weight of a path $P$ starting with the edge $f_1=v_kv_1$ that ends with the edge $f$
such that the vertices of $P$
have colors $\sigma_k,\sigma_1,\ldots,\sigma_i$ (in this order) and such that if $i\ge i_2$ ($i\ge i_3$),
the path $P$ includes $f_2$ ($f_3$, respectively). The weight of a path $P$ is defined naturally as
$$w_P=\sum_{e\in P\setminus\bigcup\limits_{T\in\TT}T} w(e)+\sum_{T\in\TT} w_T(P\cap T)\;\mbox{.}$$
Note that for some edges $f$ between $V_{i-1}$ and $V_i$ such a path $P$ may not exist.

For $i=1$, such a path exists only for $f=f_1$ and its weight is $w(f_1)$.
For $i>1$, the weights of such paths for edges between $V_{i-1}$ and $V_i$ can be computed
from the weights for edges between $V_{i-2}$ and $V_{i-1}$ in time $O(|E(G)|^2)$ including
weight adjustments in case that the consecutive edges are contained in one of the triples of $\TT$.
Note that such a path may contain at most two edges from any triple $T\in\TT$ and they must
be consecutive unless $k=3$ and the path is one of the triples of $\TT$.
Also note that if $i=i_j$, $j\in\{2,3\}$,
we ignore all edges $f$ between $V_{i-1}$ and $V_i$ different from $f_j$.

Finally, the minimum weight of a cycle containing all edges $f_1,\ldots,f_m$ with length $k$ and
vertices of colors $\sigma_1,\ldots,\sigma_k$ is computed based on the weights computed
for edges between $V_{k-1}$ and $V_k$. We discard all edges not leading to $v_k$ and
adjust weights if the edge between $V_{k-1}$ and $V_k$ and $f_1$ are contained together
in a triple of $\TT$.

Based on the above algorithm, we have shown that the Minimum dependency weight circuit problem
is fixed parameter tractable if
we can construct a small family of vertex colorings of $G$ such that every cycle of $G$
with length at most $\ell$ is rainbow in at least one of these colorings.
The above algorithm is applied for each coloring in the family and since every cycle
containing all edges $f_1,\ldots,f_m$ has length at most $\ell$, we eventually find a cycle
with minimum weight.

A desired family of vertex colorings of $G$ can be constructed using perfect hash functions.
A family $\FFF$ of hash functions from $\{1,\ldots,n\}$ to $\{1,\ldots,\ell\}$ is called {\em $\ell$-perfect}
if for every $A\subseteq\{1,\ldots,n\}$ with $|A|=\ell$, there exists $f\in\FFF$ mapping
the elements of $A$ to mutually distinct numbers. It is known~\cite{bib-schmidt90} that
there exist explicit families of $\ell$-perfect hash functions of size $2^{O(\ell)}\log^2 n$.
Since the vertex colorings of $G$ given by the functions forming an $\ell$-perfect family of hash functions
have the property that each cycle of length at most $\ell$ is rainbow in at least one of these colorings,
the statement of the lemma follows.
\end{proof}

Before we handle the case of cographic matroids, we need one more definition.
If $M$ is a cographic matroid, then a family $\CC$ of its circuits is {\em simple}
if there exists a graph $G$ corresponding to $M$ such that every circuit of $\CC$
corresponds to a cut around a vertex of $G$ (in particular, if $\CC$ is empty,
it is simple).

\begin{lemma}
\label{lm-cut}
Minimum dependency weight circuit problem is fixed parameter tractable
for cographic matroids providing that $\TT$ is simple.
\end{lemma}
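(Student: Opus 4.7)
The plan is first to apply Theorem~\ref{thm-graphicness} to the dual $M^*$, whose oracle is polynomial-time derivable from that of $M$, so as to obtain a graph $G$ with $M = M^*(G)$. Circuits of $M$ are then the bonds of $G$, and by simplicity of $\TT$, each $T \in \TT$ is the star $\delta_G(v_T)$ of a degree-$3$ vertex $v_T$. We seek a minimum-weight bond $C$ of $G$ with $F \subseteq C$ and $w(C) \le \ell$; since every edge contributes at least $1$ and every non-empty $C \cap T$ contributes $w_T(C \cap T) \ge 1$, any feasible $C$ has at most $\ell$ edges and at most $\ell$ active triples (those with $C \cap T \ne \emptyset$). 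Because $|F| \le 3$, the first step is to branch over the at most $2^{|F|-1} \le 4$ ways of assigning the endpoints of $F$ to the two shores of the sought bond, fixing in each branch vertex sets $A, B \subseteq V(G)$ that must lie on opposite sides and that we identify with a super-source $s$ and super-sink $t$.

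Next, to handle the non-linear weights $w_T$ I will apply color coding to $\TT$: take a family $\FFF$ of $\ell$-perfect hash functions $h \colon \TT \to \{1, \ldots, \ell\}$ of size $2^{O(\ell)} \log^2 |\TT|$ and, for each $h \in \FFF$, enumerate all state assignments $\sigma \colon \{1, \ldots, \ell\} \to \{0, 1, \ldots, 7\}$, where $\sigma(c) = 0$ declares color $c$ inactive and $\sigma(c) = i > 0$ specifies one of the seven non-empty subsets $A_i$ of a three-element star as the guessed value of $C \cap T$ for the unique active triple $T$ of color $c$. For each pair $(h, \sigma)$ I will construct a modified weighted graph $G_{h,\sigma}$ in which each triple $T$ with $\sigma(h(T)) = 0$ has all of its three edges reweighted to $\ell+1$ (forbidding any of them from appearing in a cut of weight at most $\ell$), and each triple $T$ with $\sigma(h(T)) = i > 0$ is restructured so that the only admissible local cut patterns within $T$ are $\emptyset$ and $A_i$---concretely, by contracting the pairs of vertices among $\{v_T\} \cup N(v_T)$ that would lie on the same shore under configuration $A_i$ into two super-vertices, distributing the weight $w_T(A_i)$ across the resulting parallel edges, and setting the weight of the (now self-looping) edges of $T \setminus A_i$ to $\ell + 1$. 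A standard max-flow computation then returns a minimum $s$--$t$ cut in $G_{h,\sigma}$, which is inclusion-minimal and therefore a bond of $G_{h,\sigma}$.

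For correctness, the optimal bond $C^*$ matches a pair $(h, \sigma)$ in which $h$ colors its active triples pairwise distinctly (guaranteed by $\ell$-perfect hashing) and $\sigma$ records their configurations, and for this pair $C^*$ is feasible in $G_{h,\sigma}$ of weight exactly $w(C^*)$, so the algorithm returns a cut of weight at most $w(C^*) \le \ell$; conversely, the modified weight of any feasible cut equals its true dependency weight in $G$ by construction. The main technical difficulty will be ensuring that the inclusion-minimal minimum cut returned in $G_{h,\sigma}$ lifts back to a genuine bond of $G$ rather than a cut whose shores become disconnected when the contractions are undone; this is resolved by exploiting that each contracted vertex set is spanned by edges of the star $\delta_G(v_T)$ itself, so that the connectivity of each shore in $G_{h,\sigma}$ is realised through bona fide edges of $G$---and the few remaining cases, where the contracted neighbours of $v_T$ on the opposite side are not adjacent in $G$, are handled by decomposing the returned cut into its sub-bonds and checking each against $F$ and the weight bound. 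Summing over the $O(1)$ branches of $F$, the $2^{O(\ell)} \log^2 |\TT|$ hash functions, the $8^\ell$ state assignments, and a polynomial-time max-flow computation per instance, the total running time is $2^{O(\ell)} \cdot \log^2 |\TT| \cdot \mathrm{poly}(|E(G)|)$, which is fixed-parameter tractable in $\ell$.
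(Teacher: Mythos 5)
There is a genuine gap at the core of your reduction: a minimum $s$--$t$ cut is not the object the lemma asks for. A circuit of the cographic matroid $M^*(G)$ is a \emph{bond} of $G$, i.e.\ an edge cut $\delta(S)$ with \emph{both} $G[S]$ and $G[V(G)\setminus S]$ connected in $G$ itself. After you contract the guessed shore $A$ of the endpoints of $F$ into a super-source $s$ (and $B$ into $t$), the minimum $s$--$t$ cut returned by max-flow is inclusion-minimal in the \emph{contracted} graph, so each component of $G[S]$ merely contains some vertex of $A$; when $|F|\ge 2$ the shores can be disconnected in $G$, and the returned cut is then a disjoint union of several bonds, none of which need contain all of $F$. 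Concretely, take two vertex-disjoint triangles joined by a long path, with $f_1$ in one triangle and $f_2$ in the other: the minimum cut separating the guessed bipartition cuts two edges in each triangle (weight $4$) and is not a bond, while every genuine circuit through both $f_1$ and $f_2$ must also cross the path and is strictly heavier. Your proposed repair --- decompose the returned cut into sub-bonds and test each --- cannot recover this heavier optimum, and enumerating all cuts of weight at most $\ell$ is not an option since there can be $n^{\Theta(\ell)}$ of them, which is XP rather than FPT. Max-flow simply cannot enforce connectivity of the shores, and that connectivity requirement is the entire difficulty of the cographic case. (The color-coding over $\TT$ and the per-triple configuration guessing are reasonable for linearizing the dependency weights, but they do not touch this obstruction; note also that $|F|\ge 2$ is exactly the case needed in the proof of Theorem~\ref{thm-circuit}, so one cannot retreat to the single-edge case where min-cut does coincide with min-bond.)

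The paper takes a completely different route precisely because of this: it adapts the Kawarabayashi--Thorup fixed-parameter algorithm for minimum $k$-way cuts of bounded size, formulating an auxiliary ``Weight dependent powercut'' problem that returns, for every equivalence relation on a small terminal set, a minimum-weight cut realizing exactly that component structure, and solves it by a recursive contraction scheme (splitting along small separators between big vertices or huge components and contracting edges that cannot appear in any optimal cut). That machinery is what enforces connectedness of the shores while keeping the running time FPT in $\ell$; your proposal would need to import something of comparable strength to close the gap.
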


\begin{proof}
We adopt the method of Kawarabayashi and Thorup~\cite{bib-cut-fpt} proving
that finding an edge-cut with at most $\ell$ edges and at least $k$ components
is fixed parameter tractable. Compared to their problem,
the cut we seek, besides weight computing issues,
is required to include the edges of $F$ and, more importantly, to be inclusion-wise minimal.

To solve Minimum dependency weight circuit problem, we will show how to solve
another problem which we call Weight dependent powercut. The problem is defined as follows.
\begin{center}
\begin{tabular}{ll}
Problem: & {\bf Weight dependent powercut} \\ \\
Parameter: & a positive integer $\ell$ \\ \\
Input: & a graph $G$ with {\bf non-negative} integer ranks $r_v$ of vertices, $v\in V(G)$\\
       & a set $T_0$ of vertices of $G$, called terminals, such that\\
       & $\mbox{}\qquad\qquad|T_0|+\sum\limits_{v\in V(G)}r_v\le 4\ell$\\
       & a collection $\TT$ of disjoint sets of one, two or three edges such that\\
       & for every $T\in\TT$, the edges of $T$ have a common end vertex $v(T)$\\
       & a positive integral function $w$ on $E(G)\setminus\bigcup_{T\in\TT}T$\\
       & non-negative integral functions $w_T$ on $\{1,2,\ldots,2^{r_{v(T)}}\}\times 2^T$, $T\in\TT$, \\
       & such that $w_T(i,A)\ge|A|$ for any $i$ and $A\subseteq T$\\ \\
Output: & for every equivalence relation $\rho$ on $T_0$,\\
        & for all choices $i_v\in\{1,\ldots,2^{r_v}\}$, $v\in V(G)$,\\
        & a cut $C$ with minimum weight $w(C)$\\
        & where $w(C)=\sum\limits_{e\in C\setminus\bigcup\limits_{T\in\TT}T} w(e)+\sum\limits_{T\in\TT} w_T(i_{v(T)},C\cap T)$ such that\\
	& every component of $G-C$ contains at least one terminal and\\
	& two terminals are in the same component iff they are $\rho$-equivalent\\
	& if the minimum weight of such cut $C$ is at most $\ell$
\end{tabular}
\end{center}
In the above problem as well as in the rest of the proof,
a {\em cut} in a graph $G$ is a set of edges of $G$ joining
$X$ and $V(G)\setminus X$ for some $X\subseteq V(G)$.
Note that the output of Weight dependent powercut problem is not a single cut
but a (large) family of cuts.

We will design a recursive procedure for computing weight dependent powercuts.
The running time of this procedure will be bounded by $O(\ell^{\ell^{\ell^{O(\ell)}}}m^3)$
where $m$ is the number of edges of the input graph.
To solve the problem stated in the lemma, we proceed as follows.
Let $G$ be the graph corresponding to the input matroid.
The rank of each vertex is chosen to be $0$ and
$T_0$ is set to be the end vertices of the edges corresponding to the elements of $F$.
The collection $\TT$ is identical to that in the input and the functions $w$ and $w_T$
also coincide (with setting the first parameter $1$ in $w_T$).
Out of the cuts returned by the procedure, we choose a minimum weight cut among those
corresponding to equivalence relations $\rho$ with two classes (and thus two components)
such that the end vertices of each edge of $F$ belong to different classes.
Clearly, this is the sought cut $C$.

Let us remark at this point that the ranks of vertices play an important role
in the recursive steps of our procedure: if we split along a cut which
separates some edges contained in the same set $T$ of $\TT$, the rank of
the vertex $v(T)$ will be increased to allow considering different possible
selections among edges not included in the considered side of the cut.
More details are given when describing the appropriate steps of the procedure.

We now describe the recursive procedure.
The input graph can be assumed to be connected: otherwise, we solve
the problem separately for each component and combine the obtained solutions together.
Observe that if a minimum cut separating two vertices $u$ and $v$ of $G$ has
at least $\ell+1$ edges, then the vertices $u$ and $v$ are always in the same component $G-C$
if $w(C)\le\ell$.
Hence, such vertices $u$ and $v$ can be identified. The resulting loops are removed,
causing some sets of $\TT$ to drop in size.
The rank of the new vertex is $r_u\cdot r_v$.
The weight function $w$ stays the same as well as
the functions $w_T$ for $T$ with $v(T)\not\in\{u,v\}$. For $T$ with $v(T)\in\{u,v\}$,
we define
$$w'_T(i,A):=\left\{\begin{array}{cl}
             w_T(i\mod 2^{r_u},A) & \mbox{if v(T)=u, and}\\
	     w_T(i\div 2^{r_u},A) & \mbox{if v(T)=v.}\end{array}\right.$$
Here, we use $a\div b:=\lceil a/b\rceil$ and $a\mod b:=a-((a\div b)-1)\cdot b$.
Clearly, this change in the input does not affect the structure of cuts with weight at most $\ell$ and
thus the output of the problem.
So, we may assume that for every pair of vertices $u$ and $v$,
there is a cut with at most $\ell$ edges seperating $u$ and $v$.

Observe that the number of output cuts can be at most $(4\ell)^{4\ell}$ since
there are at most $|T_0|^{|T_0|}\le (4\ell)^{|T_0|}$ distinct equivalence relations on $T_0$ and
at most $2^{\sum_{v\in V(G)} r_v}\le 2^{4\ell-|T_0|}$ choices of $i_v$, $v\in V(G)$.
These cuts can contain together at most $s:=\ell(4\ell)^{4\ell}$ edges.

A vertex of $G$ is said to be {\em big} if its degree is at least $D=2s+\ell$.
We distinguish three cases in the procedure.
\begin{itemize}
\item {\bf The graph $G$ contains at least two big vertices.} Let $u_1$ and $u_2$ be two big vertices.
      Let $C$ be a cut 
      with minimum number of edges
      separating $u_1$ and $u_2$ and let $G_1$ and $G_2$ be the two components of $G-C$.
      Since $u_1$ and $u_2$ are big, each of the components $G_1$ and $G_2$ contains at least $2s$ edges.
      By the assumption on the input, it holds that
      $$|V(G_1)\cap T_0|+\sum_{v\in V(G_1)}r_v\le 2\ell\quad\mbox{ or }\quad|V(G_2)\cap T_0|+\sum_{v\in V(G_2)}r_v\le 2\ell\;\mbox{.}$$
      By symmetry, let the former be the case.
      
      We now define an instance of Weight dependent powercut problem with the graph $G_1$.
      The new set of terminals is formed by $V(G_1)\cap T_0$ and the end vertices of the edges of $C$.
      The ranks of the vertices are equal to the their original ranks with the exception of the end vertices
      of the edges of $C$ whose rank is increased by one for each incident edge of $C$ contained in a set $T$
      of $\TT$ with at least one edge of $G_1$.
      Observe that the sum of the number of terminals and the vertex ranks does not exceed $4\ell$.

      The edge weights (the function $w$) are preserved. The functions $w_T$ are also the same for $T\in\TT$
      if $v(T)$ is not incident with an edge of $C$.

      Consider now a vertex $v_0$ incident with an edge of $C$.
      Let $r$ be the original and $r'$ the new rank of $v_0$.
      For $T\in\TT$ with $v(T)=v_0$ and $T\cap C=\emptyset$,
      define
      $$w'_T(i,A):=w_T(i\mod 2^r,A)\mbox{ for every $A\in 2^T$ and $i\in\{1,\ldots,2^{r'}\}$.}$$
      Let $T_1,\ldots,T_{r'-r}$ be the sets of $\TT$ such that $T_k\cap C\not=\emptyset$,,
      $T_k\cap E(G_1)\not=\emptyset$ and $v(T_k)=v_0$ (we allow multiplicites if
      $|T_k\cap C|=2$, but assume that they always appear consecutively).
      If $|T_k\cap C|=1$ then define
      $$w'_{T_k}(i,A):=\left\{\begin{array}{ll}
                       w_{T_k}(i\mod 2^r,A) & \mbox{if $i\div 2^{r+k-1}$ is even and}\\
                       w_{T_k}(i\mod 2^r,A\cup\{e_k\}) & \mbox{if $i\div 2^{r+k-1}$ is odd.}
		       \end{array}\right.$$
      If $|T_k\cap C|=2$ and $T_k=T_{k+1}$, then define
      $$w'_{T_k}(i,A):=\left\{\begin{array}{ll}
                       w_{T_k}(i\mod 2^r,A) & \mbox{if $i\div 2^{r+k-1}\mod 4=4$,}\\
                       w_{T_k}(i\mod 2^r,A\cup\{e_k\}) & \mbox{if $i\div 2^{r+k-1}\mod 4=1$,}\\
                       w_{T_k}(i\mod 2^r,A\cup\{e_{k+1}\}) & \mbox{if $i\div 2^{r+k-1}\mod 4=2$,}\\
                       w_{T_k}(i\mod 2^r,A\cup\{e_k,e_{k+1}\}) & \mbox{otherwise.}
		       \end{array}\right.$$
      We now solve Weight dependent powercut problem in $G_1$ with respect to the above defined parameters.

      By the definition of Weight dependent powercut problem,
      at least one of the optimal solutions of Weight dependent powercut problem of $G$ would be preserved
      if we contracted the edges of $G_1$ not contained in any of the cuts of the solution of the problem for $G_1$.
      After contracting such edges of $G_1$, the number of edges of $G$ is decreased by at least $|E(G_1)|-s\ge s$ and
      Weight dependent powercut problem is solved in the new instance. The contraction of each edge
      is performed by identifying its end vertices in the way described earlier.
\item {\bf The graph $G$ contains no big vertices.}
      If $G$ has less than $D^{2s+6}$ edges, we solve Weight dependent powercut problem by brute force.
      Otherwise, construct greedily a vertex-minimal connected induced subgraph $H_0$ of $G$
      containing at least $h=2(s+1)^2$ edges. Observe that $H_0$ has at most $h+D\le 2h$ edges since
      adding a single vertex to a subgraph can increase the number of edges by at most $D$.
      We now construct subgraphs $H_1,\ldots,H_{s+1}$. Suppose that we have constructed $H_i$.
      The subgraph $H_{i+1}$ is a vertex-minimal subgraph of $G\setminus (V(H_0)\cup\cdots V(H_i))$
      containing all neighbors of the vertices of $H_i$ such that each component of $H_{i+1}$
      has at least $D$ edges or it is a component of $G\setminus (V(H_0)\cup\cdots V(H_i))$.

      The components of $H_i$ with at least $D$ edges are called {\em huge} and the subgraph of $H_i$
      formed by them is denoted by $H'_i$.
      It can be shown by induction that the number of vertices of $H_i$ does not exceed $2hD^{2i}$.
      This implies (since $G$ is connected and has at least $D^{2s+5}$ vertices) that
      every subgraph $H_i$ has at least one huge component; in particular, each $H'_i$ is non-empty.

      For every subgraph $H'_i$, $i=1,\ldots,s+1$, we check whether any two (huge) components of $H'_i$
      are separated by a cut with at most $\ell$ edges in $G$. If so, consider the side of this cut
      with the number of terminals and the sum of ranks not exceeding $2\ell$ and
      process it recursively as in the first case. Then, contract
      all edges of this side not contained in any of the cuts of the obtained solution and process the resulting graph.
      Hence, we can assume that no two components of $H'_i$ can be separated by a cut of
      weight at most $\ell$ edges.

      Let $G_i$, $i=1\ldots,s+1$, be the graph obtained from the subgraph induced by the vertices
      $V(H_0)\cup\cdots\cup V(H_i)$ by identifying all vertices of $H'_i$ to a single vertex $v_i$ and
      consider the instance of Weight dependent powercut problem for $G_i$ obtained
      with the set of terminals $\{v_i\}\cup (T\cap V(G_i))$ where the ranks of vertices of $G_i$ are
      the same as in $G$ except for the vertex $v_i$. The rank of $v_i$ reflects (in the same way as
      when splitting along an edge cut with at most $\ell$ edges) that some edges
      between $H_i$ and the rest of the graph may be in sets of $\TT$; in particular, the rank of $v_i$
      is bounded from above
      by the sum of the ranks of vertices inside $H_i$ and the number of edges of leaving $H_i$.
      Though the sum of the number of terminals and the vertex ranks can (substantially) exceed $4\ell$,
      Weight dependent powercut problem can be solved by brute force since the number of vertices of $G_i$
      is at most $2h+2hD^2+\cdots+2hD^{2i}\le 4hD^{2i}$. We do so for every $i=1\ldots,s+1$;
      let $E_i$ be the set of edges contained in at least one of the cuts in the solution of the problem for $i$.

      Consider a solution of the original instance of Weight dependent powercut problem.
      If this solution does not use any edge inside $H'_i$, the edges not contained in $E_i$
      can be contracted without affecting the solution.
      Since the number of edges in the solution is at most $s$, it avoids edges inside at least one
      of the graphs $H'_i$, $i=1,\ldots,s+1$. So, we can contract in $G$ the edges of $H_0$
      not contained in $E_1\cup\cdots\cup E_{s+1}$.
      This decreases the number of edges of $G$ by at least $h-(s+1)^2\ge (s+1)^2$ and
      we recursively solve the new instance of Weight dependent powercut problem.
\item {\bf The graph $G$ contains one big vertex.}
      This case is handled similarly to the previous case. Let $b$ be the only big vertex of $G$.
      The induced subgraphs $H_0,\ldots,H_{s+1}$ are constructed as in the previous case
      but in $G\setminus b$ instead of $G$. The component of $H_i$ is huge if it contains
      at least $D$ edges or it contains a vertex adjacent to $b$; $H'_i$ is the induced subgraph of $G$
      formed by the vertex $b$ and the vertices of the huge components of $H_i$.
      
      From now on,
      the procedure runs in the same way as in the previous case.
      First, every pair of the components of $H'_i$ is checked
      for the existence of an edge cut with at most $\ell$ edges (and if one is found,
      the case is handled recursively).
      Next, the edge sets $E_1,\ldots,E_{s+1}$ in graphs $G_1,\ldots,G_{s+1}$ are computed and
      the edges of $H_0$ not contained in $E_1\cup\cdots\cup E_{s+1}$ are contracted.
      The new instance is solved recursively.
\end{itemize}
The running time of the algorithm when an edge cut with at most $\ell$ edges is found is bounded as follows:
at least half of the edges of the graph in the first call are discarded, so the overall
running time can be estimated by the sum of the running times of the two recursive calls.
The running time in the second and the third cases when no small edge cut is found is bounded
by $D^{O(s)}=s^{O(s)}$; since each step decreases the number of edges of the input graph
by at least $s^2$, the number of these steps does not exceed $O(m)$. Each step requires time $O(m^2)$
as there can be $O(m)$ pairs of huge components tested for the existence of a small edge cut,
each test requiring $O(m)$ time (we are looking for edge cuts of with at most $\ell$ edges).
Hence, the total running time of the algorithm to solve Weight dependent powercut problem
is $O(\ell^{\ell^{\ell^{O(\ell)}}}m^2)$.
\end{proof}

We can now prove the main theorem of this section.

\begin{theorem}
\label{thm-circuit}
The problem of deciding the existence of a circuit of length at most $d$
containing two given elements of a regular matroid is fixed parameter
tractable when parameterized by $d$.
\end{theorem}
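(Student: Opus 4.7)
The plan is to reduce the problem to several instances of the Minimum dependency weight circuit (MDWC) problem on atomic pieces of the matroid, using Seymour's Regular Matroid Decomposition Theorem (Theorem~\ref{thm-regular}) as the structural backbone and a bottom-up dynamic program over the decomposition tree. First I apply Theorem~\ref{thm-regular} to decompose $M$ into a sequence of $1$-, $2$-, and $3$-sums of matroids $M_1,\ldots,M_s$, each of which is graphic, cographic, or a copy of $R_{10}$. The combinatorics of these sums naturally produces a tree $T$ whose nodes are the atomic pieces and whose edges are labeled by the shared sets (empty, a single element, or a $3$-element circuit).

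The key observation is that a circuit $C$ of $M$, viewed as a cycle of the binary matroid, decomposes along every sum boundary $E_i\cap E_j$ into cycles $C_i,C_j$ of the two sides satisfying $C = C_i\triangle C_j$, so the trace $C\cap(E_i\cap E_j)$ of $C$ on each boundary determines how the pieces hook up. Rooting $T$ at an atomic piece containing $e_1$, for each node $v$ and each ``boundary pattern'' $p\subseteq T_{v,\text{parent}}$ (the subset of the at most three shared elements with the parent of $v$ that the candidate cycle uses) I compute the minimum weight of a cycle inside the subtree below $v$ that is consistent with $p$ and contains those of $\{e_1,e_2\}$ lying below $v$. This is done by solving on $M_v$ an instance of MDWC with $F$ consisting of $\{e_1,e_2\}\cap E(M_v)$ together with the boundary elements selected by $p$, with $\TT$ formed by the $3$-element boundary circuits joining $v$ to its children, with $w$ equal to $1$ on ordinary elements, and with each $w_T$ set to (and truncated at $d{+}1$) the value returned by the recursive call on the child attached via $T$ for the appropriate complementary pattern $T\triangle p'$. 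By combinatorial compatibility of symmetric differences along $3$-sums, the value at the root is precisely the minimum length of a circuit in $M$ through $e_1$ and $e_2$, and comparing it with $d$ decides the problem.

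For graphic pieces, the MDWC subproblems are solved by Lemma~\ref{lm-cycle}; the pieces equal to $R_{10}$ have ten elements and are handled by brute force; the cographic pieces are meant to be solved by Lemma~\ref{lm-cut}. The main obstacle is the simplicity hypothesis on $\TT$ required by Lemma~\ref{lm-cut}: we need, for each cographic piece $M_v=M^*(H_v)$, a single graph realization $H_v$ in which every $3$-element circuit of $\TT$ (a $3$-edge bond of $H_v$) is the star of a single vertex. To achieve this I would exploit the disjointness of the boundaries in $\TT$ together with Whitney-type operations (vertex splits and $2$-isomorphism twists) applied locally at each $3$-element bond of $H_v$; by Theorem~\ref{thm-graphicness} applied to the dual of $M_v$ one first extracts some $H_v$, and then each boundary bond can be turned independently into a vertex star by adding a fresh apex vertex whose star is precisely the three boundary edges and suppressing the original endpoints.

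Putting the three cases together, each node of $T$ produces only a constant number of boundary patterns, each atomic MDWC call runs in FPT time by Lemma~\ref{lm-cycle} or Lemma~\ref{lm-cut} (or in constant time for $R_{10}$), and the size of $T$ is polynomial in $|E(M)|$ by Theorem~\ref{thm-regular}. Hence the overall algorithm runs in FPT time parameterized by $d$, proving the theorem. The delicate point to get right in the full proof is verifying that the local modifications of $H_v$ preserve $M^*(H_v)$ and do not inflate the size of the piece non-trivially; everything else is a routine combination of Theorem~\ref{thm-regular}, Lemma~\ref{lm-cycle}, and Lemma~\ref{lm-cut}.
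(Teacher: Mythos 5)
Your overall architecture coincides with the paper's: apply Theorem~\ref{thm-regular}, organize the graphic, cographic and $R_{10}$ pieces into a rooted tree along the $1$-, $2$- and $3$-sum boundaries, and run a bottom-up dynamic program whose node subproblems are MDWC instances (Lemma~\ref{lm-cycle} for graphic pieces, Lemma~\ref{lm-cut} for cographic ones, brute force for $R_{10}$), feeding the children's optima into the functions $w_T$ on the boundary triangles. However, two steps have genuine gaps. The first is the circuit-versus-cycle bookkeeping at the sums. Your DP states are ``minimum weight of a \emph{cycle} consistent with the boundary pattern $p$'', but a cycle of a binary matroid is a disjoint union of circuits, so the value at the root would be the minimum length of a cycle through $e_1$ and $e_2$, which can be strictly smaller than the minimum length of a circuit through both (two short disjoint circuits, one through each element, form such a cycle). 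Since the composition rule for $2$- and $3$-sums is stated for cycles, recovering circuits requires tracking more than the trace on the boundary triangle: the paper records, for each boundary element $e_j$, both the minimum length $m_{i,j}$ of a circuit of the subtree matroid meeting the parent triangle exactly in $e_j$, and the minimum length $m'_{i,j}$ of such a circuit $C$ for which $C\triangle\{e_1,e_2,e_3\}$ is \emph{also} a circuit; which of the two is loaded into $w_T(\{t_j\})$ versus $w_T(T\setminus\{t_j\})$ depends on whether the circuit on the other side of the sum uses one or two elements of the triangle. Your appeal to ``combinatorial compatibility of symmetric differences'' hides exactly this distinction, and without it the glued object need not be a circuit, nor need the minimum be attained by one.

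The second gap is your mechanism for meeting the simplicity hypothesis of Lemma~\ref{lm-cut}. Adding an apex vertex whose star is the three boundary edges and ``suppressing the original endpoints'' amounts to contracting one side of a $3$-edge bond of $H_v$, which changes the bond structure and hence the cographic matroid $M^*(H_v)$; Whitney twists alone also cannot turn an arbitrary $3$-edge bond into a vertex star. The paper instead refines the decomposition itself: it extends the tree $T_M$ with additional nodes (further $3$-sums) so that in each resulting cographic piece every special triangle does correspond to the cut around a vertex. That operation preserves the matroid and the parent--child structure of the boundaries, and is the step your sketch would need to replace its local graph surgery with.
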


\begin{proof}
We solely focus on presenting the algorithm witnessing the fixed parameter
tractability of the problem without trying to optimize its running time.
Let $M$ be the input matroid and $f_1$ and $f_2$ the two given elements.

If the matroid $M$ is not connected and the two given elements do not lie
in the same component, then there is no circuit containing both of them.
Otherwise, if the input matroid is not connected, we can restrict
the input matroid to the component containing the given element(s).
So, we assume the input matroid is connected.

We apply Theorem~\ref{thm-regular} to obtain the series of $2$-sums and $3$-sums of
graphic matroids, cographic matroids and matroids isomorphic to $R_{10}$ that yields
the input matroid. Let $M_1,\ldots,M_K$ be these matroids. By symmetry,
we assume that the element $f_1$ is contained in $M_1$.
Also fix graphs $G_i$ corresponding to graphic and cographic matroids $M_i$
appearing in this sequence (such graphs can be constructed in polynomial time
by Theorem~\ref{thm-graphicness}).

If we allow some of the $M_1,\ldots,M_K$ to be isomorphic to matroids obtained
from the matroid $R_{10}$ by adding elements parallel to its original elements,
we can also assume the following:
if one of $3$-sums yielding the input matroid $M$ involves a circuit $E$,
then there exist indices $i_1$ and $i_2$ such that $E\subseteq E(M_{i_j})$ for $j=1,2$.
This assumption allows us to create a tree with $K$ vertices representing
the way in which the matroid $M$ is obtained from $M_1,\ldots,M_K$. More precisely,
there exists a rooted tree $T_M$ with a root $u_1$ and additional $K$-1 nodes $u_2,\ldots,u_K$
such that:
\begin{itemize}
\item the nodes $u_1,\ldots,u_K$ of $T_M$ one-to-one correspond to the matroids $M_1,\ldots$, $M_K$, and
\item for every pair of $3$-element circuit $E$ appearing in a $3$-sum of the construction of $M$
      is contained in matroids $M_i$ and $M_j$ such that $u_i$ is a child of $u_j$.
\end{itemize}
Moreover, this tree $T_M$ can be constructed in polynomial time given the sequence
$M_1$, $\ldots$, $M_K$ with operations yielding the matroid $M$,

For a matroid $M_i$ corresponding to a node $u_i$ of $T_M$,
we call a set of $M_i$ corresponding to a sum with the matroid of a child of $u_i$
a {\em special set}.
For $i\not=1$, we call the set of $M_i$ corresponding to the sum with the matroid of its parent
a {\em parent set}. For $i=1$, the parent set is defined to be $\{f_1\}$.
By a simple extension of the tree $T_M$ with additional nodes, one can achieve that
the special $3$-element sets of cographic matroids $M_i$ correspond to cuts around vertices in $G_i$.
So, we assume this to be the case in what follows.
Similarly, adding a $2$-sum with the matroid $U_{1,2}$ allows us to assume that the element $f_2$
is contained in a matroid corresponding to a leaf node of $T_M$.

Define $N_1,\ldots,N_K$ to be the matroids obtained based on $T_M$
by the following recursive construction from the leaves towards the root.
If $u_i$ is a leaf node of $T_M$, set $N_i$ to the matroid $M_i$.
If $u_i$ is a node with children $u_{j_1},\ldots,u_{j_k}$,
set $N_i$ to the matroid obtained from $M_i$ and $N_{j_1},\ldots,N_{j_k}$
by sums along the special sets of $M_i$.
Note that the matroid $N_1$ is the input matroid $M$.

We will compute the following quantities $m_i$, $m_{i,j}$ and $m'_{i,j}$ for the matroids $N_1,\ldots,N_K$:
\begin{itemize}
\item if $N_i$ does not contain the element $f_2$ and its parent set has a single element $e_1$,
      we compute the minimum number $m_i$ of elements of a circuit $C$ of $N_i$ containing $e_1$,
\item if $N_i$ contains $f_2$ and its parent set has a single element $e_1$,
      we compute the minimum number $m_i$ of elements of a circuit $C$ of $N_i$ containing both $e_1$ and $f_2$,
\item if $N_i$ does not contain the element $f_2$ and its parent set has three elements $\{e_1,e_2,e_3\}$,
      for each $j=1,2,3$, we compute the minimum number $m_{i,j}$ of elements of a circuit $C$ of $N_i$ 
      with $C\cap\{e_1,e_2,e_3\}=\{e_j\}$ and
      the minimum number $m'_{i,j}$ of elements of a circuit $C$ of $N_i$ with $C\cap\{e_1,e_2,e_3\}=\{e_j\}$
      such that $C\triangle\{e_1,e_2,e_3\}$ is also a circuit, and
\item if $N_i$ contains $f_2$ and its parent set has three elements $\{e_1,e_2,e_3\}$,
      for each $j=1,2,3$, we compute the minimum number $m_{i,j}$ of elements of a circuit $C$ of $N_i$ 
      with $C\cap\{e_1,e_2,e_3,f_2\}=\{e_j,f_2\}$ and
      the minimum number $m'_{i,j}$ of elements of a circuit $C$ of $N_i$ with $C\cap\{e_1,e_2,e_3,f_2\}=\{e_j,f_2\}$
      such that $C\triangle\{e_1,e_2,e_3\}$ is also a circuit.
\end{itemize}
All the minimums above are computed assuming that the number of elements of $C$ does not exceed $d$;
if it exceeds this quantity, we do not determine the value.
Note that $m_1$ is the minimum number of
elements of a circuit of $N_1=M$ containing both $f_1$ and $f_2$.
So, determining these quantities in a fixed parameter way will complete the proof.

The numbers $m_i$, $m_{i,j}$ and $m'_{i,j}$ are computed from the leaves towards the root.
Consider a node $u_i$ of $T_M$.
If the node corresponds to the matroid $U_{1,2}$ containing $f_2$, set $m_i=2$.
In the general case, we proceed as follows.
Let $\TT$ be the set of special $3$-element sets of $M_i$.
Let $T=\{t_1,t_2,t_3\}\in\TT$ and let $N_k$ be the matroid with the parent set $T$;
note that the node corresponding to $N_k$ is a child of $u_i$.
Define
$$w_T(\emptyset)=0,\quad w_T(\{t_j\})=m'_{k,j}-1,$$
$$w_T(T\setminus\{t_j\})=\max\{2,m_{k,j}-1\}\mbox{  and  } w_T(T)=d+1\;\mbox{.}$$
Next, we assign weights to the elements of $M_i$ not contained in a triple of $\TT$.
The elements of $M_i$ contained in a $1$-element special set summed with a parent set of $N_k$
are assigned weight $m_k-1$ and
the elements of $M_i$ not contained in a special set are assigned weights one. This defines the weight function $w$.
Further, if $N_i$ contains $f_2$, then $u_i$ has a child $u_{i'}$ such that $N_{i'}$ contains $f_2$.
Fix such $i'$ (if it exists) for what follows. 

We show that the sought quantities can be computed using the Minimum dependency weight circuit problem.
Suppose first that the parent set of $N_i$ has a single element $e_1$.
A simple manipulation using binary representations of $N_i$ and $M_i$ and
the definition of $N_i$ yields the following.
If $N_i$ does not contain the element $f_2$,
the quantity $m_i$ is the minimum weight of a circuit of $M_i$
with respect to the weights $w$ and $w_T$, $T\in\TT$, containing $e_1$.
If $N_i$ contains $f_2$,
then the quantity $m_i$ is the minimum weight of a circuit of $M_i$ containing $e_1$ and
containing at least one element of the parent set of $N_{i'}$. Since there are at most three
such elements, we get at most three instances of MDWC problem.

If the parent set of $N_i$ has three elements, say, $e_1$, $e_2$ and $e_3$,
the situation gets more complicated.
Again, the fact that $N_i$ and $M_i$ are binary and the way in which $N_i$ is obtained from $M_i$
gives the following. The quantity $m_{i,j}$ is
the minimum weight of a circuit of $M_i$ containing $e_j$ and,
if $f_2\in N_i$, at least one element of the parent set of $N_k$.
The quantity $m'_{i,j}$ is the minimum weight of a circuit of $M_i$ containing
$e_{j+1}$ and $e_{j+2}$ and, if $f_2\in N_i$, at least one element of the parent set of $N_{i'}$.
We obtain at most eighteen instances of MDWC problem.

If $M_i$ is graphic or cographic, MDWC problem can be solved using Lemmas~\ref{lm-cycle} and
\ref{lm-cut}. Note that in the cographic case all $3$-element special sets correspond to cuts
around vertices by our assumptions. If $M_i$ is a matroid obtained from $R_{10}$ by adding
elements parallel to its original elements, we can just enumerate all circuits of $M_i$ (the number
of them is bounded by a polynomial of degree $D$ where $D$ is the maximum length of a circuit of $R_{10}$) and
we solve MDWC problem by brute force.
In all three cases, the exponent in the polynomial bounding the running time of the algorithm is constant.
\end{proof}

Since two elements of a matroid $M$ are adjacent in the graph $G^C_{M,d}$
if and only if they are contained in a common circuit of size at most $d$,
Theorem~\ref{thm-circuit} immediately yields the following.

\begin{corollary}
\label{cor-gaifman}
The graph $G^C_{M,d}$ can be constructed in polynomial time for an oracle-given regular matroid $M$ and
an integer $d$ and the degree of the polynomial in the estimate on its running time is independent of $d$.
\end{corollary}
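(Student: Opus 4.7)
The plan is to reduce the construction of $G^C_{M,d}$ directly to Theorem~\ref{thm-circuit}. By the definition of $G^C_{M,d}$, its vertex set is the ground set $E(M)$ and two distinct elements $x,y \in E(M)$ form an edge if and only if there exists a circuit of $M$ of size at most $d$ containing both $x$ and $y$. So for each of the $\binom{|E(M)|}{2}$ candidate pairs, I would invoke the algorithm from Theorem~\ref{thm-circuit} to decide the edge relation, and then output the resulting graph.

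The running time analysis is the only thing to verify. Theorem~\ref{thm-circuit} states that deciding the existence of a circuit of length at most $d$ through two given elements of a regular matroid is fixed parameter tractable when parametrized by $d$. In particular, the running time is of the form $g(d)\cdot n^{c}$ for some computable function $g$ and some absolute constant $c$ independent of $d$, where $n=|E(M)|$. Iterating this over all $O(n^2)$ pairs of elements yields total running time $g(d)\cdot n^{c+2}$, which is polynomial in $n$ with exponent $c+2$ independent of $d$, exactly as required.

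One minor point to address is that loops (elements $e$ with $r(\{e\})=0$) are $1$-element circuits, and elements in a common parallel class form $2$-element circuits; such incidences can be detected in polynomial time directly from the independence oracle and contribute edges of $G^C_{M,d}$ as soon as $d\ge 1$ or $d\ge 2$ respectively. These are subsumed by Theorem~\ref{thm-circuit} applied with the same two elements or the trivially adapted formulation. No serious obstacle is expected here, since the entire technical work has been done already in Theorem~\ref{thm-circuit}; the corollary is essentially a packaging statement that quantifies over all pairs of elements and records that the exponent of $n$ does not blow up with $d$.
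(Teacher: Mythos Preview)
Your proposal is correct and matches the paper's approach exactly: the paper states that since adjacency in $G^C_{M,d}$ is precisely the existence of a common circuit of size at most $d$, the corollary follows immediately from Theorem~\ref{thm-circuit}. Your write-up simply makes explicit the $\binom{n}{2}$ iteration and the resulting $g(d)\cdot n^{c+2}$ bound, which is the intended reading.
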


\section{Deciding first order logic properties}
\label{sect-alg}

Since first order logic formulas are special cases of monadic second order logic formulas,
the results of Hlin{\v e}n{\'y}~\cite{bib-hlineny03,bib-hlineny06} imply the following.

\begin{lemma}
\label{lm-local-sentence}
Let $\MM$ be a class of regular matroids with locally bounded branch-width.
There exists a cubic-time algorithm that given an oracle-given $M\in\MM$,
an element $x$ of $M$, the graph $G^C_{M,d}$ and an $r$-local formula $\psi[x]$
with predicates $C^1_M,\ldots,C^d_M$ (the locality of $\psi$ is measured in $G^C_{M,d}$)
decides whether $\psi[x]$ is satisfied in $M$.
\end{lemma}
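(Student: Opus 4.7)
The plan is to reduce the question to Hlin\v en\'y's cubic-time MSOL model-checking algorithm for matroids of bounded branch-width represented over a fixed finite field. First, I would exploit the $r$-locality of $\psi[x]$ with respect to $G^C_{M,d}$: every quantifier in $\psi[x]$ ranges over vertices at distance at most $r$ from $x$ in $G^C_{M,d}$, so compute this set $S_r$ by a breadth-first search in the input graph $G^C_{M,d}$. By Lemma~\ref{lm-gaifman-distance}, every element of $S_r$ lies at distance at most $dr$ from $x$ in $M$, hence $S_r\subseteq N^M_{dr}(x)$.

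Next, let $M':=M|_{S_r}$ denote the restriction of $M$ to $S_r$. Submodularity of the rank function implies that branch-width is monotone under restriction, so combining this with the locally bounded branch-width assumption on $\MM$ yields that the branch-width of $M'$ is at most $f(dr)$, where $f$ is the function witnessing the local bound for $\MM$. Regularity is preserved under restriction, so $M'$ is regular and in particular binary; using the independence oracle of $M$ (restricted to $S_r$), a representation of $M'$ over $\GF(2)$ can be constructed in quadratic time as recalled in Subsection~\ref{subset-alg}. Applying Theorem~\ref{thm-hlineny+oum} to this representation with $k:=f(dr)$ then produces, in cubic time, a branch-decomposition of $M'$ of width at most $f(dr)$.

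The remaining step is to translate $\psi[x]$ into an equivalent MSOL formula $\psi'[x]$ that uses only the independence predicate of $M'$: each occurrence of $C^k_M(y_1,\ldots,y_k)$ is rewritten as the statement that $\{y_1,\ldots,y_k\}$ is dependent while each of its proper subsets is independent. Applying Hlin\v en\'y's MSOL model-checking algorithm~\cite{bib-hlineny03,bib-hlineny06} to $M'$, the branch-decomposition computed above, the translated formula $\psi'[x]$ and the element $x$ decides the property in cubic time. Composing this with the earlier steps yields a cubic-time algorithm in total.

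The core correctness point to verify is that the substitution preserves the semantics, namely that $\psi[x]$ holds in $M$ if and only if $\psi'[x]$ holds in $M'$ under $x\mapsto x$. This reduces to two observations: (i) the $r$-locality of $\psi$ forces every quantified variable, and therefore every argument of any predicate $C^k_M$, to evaluate inside $S_r$; and (ii) by definition of restriction, a set $T\subseteq S_r$ is a circuit of $M$ precisely when it is a circuit of $M'$. Both are routine, and the remaining ingredients (BFS on $G^C_{M,d}$, quadratic-time binary representation construction, Theorem~\ref{thm-hlineny+oum}, and Hlin\v en\'y's MSOL theorem) are assembled within cubic overall time.
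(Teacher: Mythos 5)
Your proposal is correct and follows essentially the same route as the paper: restrict $M$ to the $r$-ball around $x$ in $G^C_{M,d}$, observe that this restriction is binary with branch-width bounded via the locally-bounded-branch-width function (the paper bounds it by $f_{\MM}(rd)$ exactly as you do via Lemma~\ref{lm-gaifman-distance}), build a $\GF(2)$ representation in quadratic time, and invoke Hlin\v{e}n\'y's cubic-time MSOL model checking. Your additional details --- monotonicity of branch-width under restriction, the explicit rewriting of $C^k_M$ in terms of the independence predicate, and the check that circuits inside $S_r$ are preserved under restriction --- are points the paper compresses into ``it is straightforward to verify,'' so they are welcome but do not constitute a different argument.
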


\begin{proof}
Let $f_{\MM}$ be the function witnessing that $\MM$ is
the class of matroids with locally bounded branch-width.
We now briefly describe the algorithm.
Given the graph $G^C_{M,d}$, the matroid $M$ and the element $x$,
construct the set $X$ of elements of $M$ at distance at most $r$ from $x$ in $G^C_{M,d}$.
Let $M_X$ be the matroid $M$ restricted to $X$. The matroid $M_X$ is binary (since $M$ is regular) and
thus we can construct in quadratic-time its representation over the binary field.
Moreover, the branch-width of $M_X$ is at most $f_{\MM}(rd)$.

Deciding whether $\psi[x]$ is satisfied now reduces to deciding whether this formula is satisfied in $M_X$.
However, $M_X$ is a matroid represented over the binary field and it has branch-width at most $f_{\MM}(rd)$.
Hence, this can be decided in cubic time using the techniques from~\cite{bib-hlineny03,bib-hlineny06}.
To be more precise, the results from~\cite{bib-hlineny03,bib-hlineny06} asserts that MSOL properties,
which include FOL properties, can be decided in cubic time for matroids with bounded branch-width that
are represented over a fixed finite field. Here, we have a constant $x$ appearing in our logic language and
a distance constraint
but it is straightforward to verify that the techniques from~\cite{bib-hlineny03,bib-hlineny06}
apply to this setting, too.
\end{proof}

We are now ready to prove the main result of this paper. The final part of the argument
is analogous to that used by Frick and Grohe in~\cite{bib-frick+}.

\begin{theorem}
\label{thm-main}
Let $\psi_0$ be a first order logic sentence and $\MM$ a class of regular matroids with locally bounded
branch-width. There exists a polynomial-time algorithm that decides whether an oracle-given
$n$-element matroid $M\in\MM$ satisfies $\psi_0$. Moreover, the degree of the polynomial in the estimate
on its running time is independent of $\psi_0$ and $\MM$, i.e., testing first order logic properties
in classes of regular matroids with locally bounded branch-width is fixed parameter tractable.
\end{theorem}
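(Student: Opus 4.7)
The plan is to combine Gaifman's locality theorem with the matroid tools developed in the previous sections, following the strategy of Frick and Grohe~\cite{bib-frick+}. I first apply the circuit reduction to $\psi_0$, obtaining an equivalent first order sentence $\psi_0^C$ over the circuit predicates $C^1_M,\ldots,C^d_M$, where $d$ is the quantifier depth of $\psi_0$. By Corollary~\ref{cor-gaifman}, the Gaifman graph $G^C_{M,d}$ of $\psi_0^C$ can be constructed in polynomial time with the degree of the polynomial independent of $\psi_0$ and $\MM$.

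Next I apply Gaifman's theorem (Theorem~\ref{thm-gaifman}) to rewrite $\psi_0^C$ as a Boolean combination of sentences of the form
$$\exists x_1,\ldots,x_k\left(\bigwedge_{1\le i<j\le k} d(x_i,x_j)>2r \;\land\; \bigwedge_{1\le i\le k} \psi[x_i]\right)\;\mbox{,}$$
where $\psi$ is $r$-local in $G^C_{M,d}$ and the integers $k,r$ depend only on $\psi_0$. It therefore suffices to decide each such sentence individually.

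For each $r$-local formula $\psi[x]$, I evaluate $\psi[x]$ at every element $x\in E(M)$ using Lemma~\ref{lm-local-sentence}, obtaining the set $S_\psi$ of elements of $M$ satisfying $\psi$. The applicability of Lemma~\ref{lm-local-sentence} relies on Lemma~\ref{lm-gaifman-distance}: the $r$-neighborhood of $x$ in $G^C_{M,d}$ is contained in $N^M_{rd}(x)$, and this set induces a sub-matroid of branch-width at most $f_{\MM}(rd)$ by the locally bounded branch-width assumption. Since Lemma~\ref{lm-local-sentence} runs in cubic time per element and there are $n$ elements in total, this step takes polynomial time with degree independent of $\psi_0$.

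What remains, and what I expect to be the main obstacle, is to decide in FPT time whether $S_\psi$ contains $k$ elements pairwise at $G^C_{M,d}$-distance strictly greater than $2r$. Following Frick and Grohe, I would compute a BFS-layering of $G^C_{M,d}$ and partition the search into sub-problems confined to bounded-radius balls. Each such ball corresponds, via Lemma~\ref{lm-gaifman-distance}, to a restricted sub-matroid of branch-width at most $f_{\MM}(O(rd))$; for such a matroid a binary representation can be computed in quadratic time, and Hlin\v en\'y's cubic-time MSOL algorithm~\cite{bib-hlineny06} can then be invoked to perform the combinatorial selection locally. The outputs of these local calls are assembled into the final decision by a constant-size Boolean combination determined by the Gaifman decomposition. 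The subtlety is that $G^C_{M,d}$ itself does not a priori have locally bounded tree-width as a graph, so one relies on matroid branch-decompositions of local neighborhoods rather than graph tree-decompositions, and one must verify that the Frick-Grohe scattered-set framework transfers to this setting.
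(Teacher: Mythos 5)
Your first three stages---circuit reduction, construction of $G^C_{M,d}$ via Corollary~\ref{cor-gaifman}, application of Gaifman's theorem, and the evaluation of the $r$-local formula $\psi[x]$ at every element via Lemma~\ref{lm-local-sentence}---coincide with the paper's proof and are correctly justified, including the observation that the $r$-ball of $x$ in $G^C_{M,d}$ sits inside $N^M_{rd}(x)$ and hence induces a matroid of branch-width at most $f_{\MM}(rd)$. The gap is in the step you yourself flag as the main obstacle: deciding whether $S_\psi$ contains $k$ elements pairwise at distance greater than $2r$. ``Compute a BFS-layering and partition the search into bounded-radius balls'' is not an argument here. The difficulty is that after a greedy selection produces $\ell<k$ centers $x_1,\ldots,x_\ell$, every element of $S_\psi$ lies within distance $2r$ of some center, but the radius-$2r$ balls around distinct centers may contain elements at distance at most $2r$ from one another; the maximum scattered set therefore does \emph{not} split as a sum of maxima over these balls, and no layering of $G^C_{M,d}$ by itself produces independent subproblems of bounded radius.

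The paper closes this with a distance-gap clustering argument that your proposal omits: one finds the smallest $\xi$ (provably at most $k^2$, since only $O(k^2)$ distance scales arise from the $\binom{\ell}{2}$ pairs of centers) such that no two elements of $S_\psi$ are at distance between $8^{\xi}r+1$ and $8^{\xi+1}r$, takes a maximal subset $X'_\psi$ of the centers at pairwise distance at least $8^{\xi}r+1$, and forms the balls $A_i$ of radius $8^{\xi}r+2r$ around them. The gap in the distance spectrum guarantees simultaneously that the $A_i$ cover $S_\psi$, that elements in distinct $A_i$ are at distance more than $2r$ (so the answer is the sum of the local maxima, capped at $k$), and that each $A_i$ has radius bounded by a function of $k$, $r$ and $d$ alone, so that the restriction of $M$ to a slightly enlarged set $B_i$ (all elements within matroid distance $2dr$ of $A_i$, needed so that $G^C_{M,d}$-distances up to $2r$ are computed correctly inside the restriction) has branch-width at most $f_{\MM}(8^{k^2}rd+4rd)$ and Hlin\v{e}n\'y's MSOL algorithm applies. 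Without this separation argument your decomposition is not sound, so you should supply it (or an equivalent sparse-neighborhood-cover argument with the independence property made explicit) before the proof can be considered complete.
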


\begin{proof}
Let us start with describing the algorithm. Set $d$ to be the depth of quantification in $\psi_0$.
First, using Corollary~\ref{cor-gaifman}, construct the graph $G^C_{M,d}$. This can be done
in polynomial time where the degree of the polynomial in the time estimate is independent of $d$.
By Theorem~\ref{thm-gaifman}, the circuit reduction $\psi_0^C$ of $\psi_0$ is equivalent to a Boolean
combination of sentences of the form (\ref{eq-gaifman}). Hence, it is enough to show how
a sentence of the form (\ref{eq-gaifman}) can be decided in polynomial time.

Fix a sentence of the form (\ref{eq-gaifman}) where $\psi$ is an $r$-local formula
with predicates $C^1_M,\ldots,C^d_M$. By Lemma~\ref{lm-local-sentence}, we can test whether $\psi[x]$
is satisfied for every element $x$ of $M$ in cubic time. Let $X_\psi$ be the set of all elements $x$
such that $\psi[x]$ is satisfied. This set can be constructed in time $O(n^4)$ where $n$ is the number of
elements of $M$.

We now have to decide whether $X_\psi$ contains $k$ elements at pairwise distance at least $2r+1$ in $G^C_{M,d}$.
Choose $x_1\in X_\psi$ arbitrary.
If $x_1,\ldots,x_{i-1}$ have already been chosen,
choose $x_i$ to be an arbitrary element of $X_\psi$ at distance at least $2r+1$
from each of the elements $x_1,\ldots,x_{i-1}$ (if such $x_i$ exists).
Let $\ell$ be the number of elements obtained in this way.
Observe that the list $x_1,\ldots,x_{\ell}$ can be constructed in time $O(n^3)$
given $X_\psi$ and $G^C_{M,d}$.

If $\ell\ge k$, the sentence (\ref{eq-gaifman}) is satisfied. Assume that $\ell<k$.
Let $\xi$ be the smallest positive integer such that no two elements of $X_\psi$
have distance in $G^C_{M,d}$ between $8^{\xi}r+1$ and $8^{\xi+1}r$ (inclusively).
Since the number of pairs of elements in $x_1,\ldots,x_\ell$ is ${\ell\choose 2}\le {k\choose 2}$,
the number of different distances between the elements of $X_\psi$ is at most $k^2/2$ and
thus $\xi\le k^2$ (it can be shown that $\xi\le k$ but the quadratic bound suffice for our purposes).
Let $X'_\psi$ be a maximal subset of elements $x_1,\ldots,x_{\ell}$ such that
the distance between any two elements of $X'_\psi$ is at least $8^{\xi}r+1$.
Further, let $\ell'=|X'_\psi|$.
Since $k$ is bounded,
the integer $\xi$ and the set $X'_{\psi}$ can be determined in quadratic time.

For simplicity, let us assume that $X'_{\psi}=\{x_1,\ldots,x_{\ell'}\}$.
Let $A_i$, $1\le i\le\ell'$, be the set of elements of $M$ at distance at most $8^{\xi}r+2r$ from $x_i$ in $G^C_{M,d}$.
Note that if elements $a\in A_i$ and $a'\in A_{i'}$, $1\le i,i'\le\ell'$, are at distance at most $2r$,
then $i=i'$ (in particular, the sets $A_1,\ldots,A_{\ell'}$ are disjoint).
Indeed, if the distance of $a$ and $a'$ were at most $2r$,
then the distance between $x_i$ and $x_{i'}$
would be at most $2\cdot (8^{\xi}r+2r)+2r=2\cdot 8^{\xi}r+6r\le 8^{\xi+1}r$ by the triangle inequality,
which would contradict the choice of $\xi$.
On the other hand, every element of $X_\psi$ is at distance at most $2r$ from an element $x_j$, $1\le j\le\ell$, and
the element $x_j$ is at distance at most $8^{\xi}r$ from one of the elements $x_1,\ldots,x_{\ell'}$
by the choice of $X'_{\psi}$. Hence, every element of $X_\psi$ is contained in one of the sets $A_1,\ldots,A_{\ell'}$.

Since $X_\psi\subseteq A_1\cup\cdots\cup A_{\ell'}$ and
no two elements contained in distinct sets $A_1,\ldots,A_{\ell'}$ are at distance at most $2r$,
the maximum size of a subset of elements $X_\psi$ at pairwise distance at least $2r+1$
is equal to the sum of maximum sizes of such subsets in $A_1\cap X_\psi,\ldots,A_{\ell'}\cap X_\psi$.

Let $B_i$, $1\le i\le\ell'$, be the set of elements of $M$ at distance at most $2dr$ in $M$ from an element of $A_i$.
Observe that two elements of $A_i$ are at distance at most $2r$ in $G^C_{M,d}$ if and only if
they are at distance at most $2r$ in $G^C_{M_i,d}$ where $M_i$ is the matroid $M$ restricted to $B_i$.
In particular, the maximum size of a subset of $A_i\cap X_\psi$
formed by elements at pairwise distance at least $2r+1$ in $G^C_{M,d}$
is equal to the maximum size of a subset of $A_i\cap X_\psi$
formed by elements at pairwise distance at least $2r+1$ in $G^C_{M_i,d}$.
The matroid $M_i$ is formed by elements at distance at most $(8^{\xi}r+2r)d+2dr\le 8^{k^2}rd+4rd$ from $x_i$ and
thus its branch-width is at most $f_{\MM}(8^{k^2}rd+4rd)$
where $f_{\MM}$ is the function witnessing that the class $\MM$ has locally bounded branch-width.
Since it can be encoded in the first order logic that
the distance between two elements in $G^C_{M_i,d}$ is at most $2r$,
we can use the results of Hlin{\v e}n{\'y} on decidability of MSOL properties from~\cite{bib-hlineny03,bib-hlineny06}
to determine (in cubic time) whether there exists a subset of $A_i\cap X_\psi$ of a fixed size $m$, $1\le m\le k$,
formed by elements at distance at least $2r+1$ in $G^C_{M_i,d}$.

If there exists such a subset of $A_i\cap X_\psi$ of size $k$ for some $i$, $1\le i\le\ell'$,
then the formula (\ref{eq-gaifman}) is true.
Otherwise, let $m_i$ be the maximum size of such a subset of $A_i\cap X_\psi$, $1\le i\le\ell'$.
As we have explained, the numbers $m_i$ can be determined in cubic time (recall that $k$ does not depend on the input).
The formula (\ref{eq-gaifman}) is true if and only if $m_1+\cdots+m_{\ell'}\ge k$.
This finishes the description of the algorithm.
The correctness and polynomiality of the algorithm
where the degree of the polynomial does not depend on $\psi_0$ and $\MM$ follows from the arguments we have given.
\end{proof}

\section*{Acknowledgement}

The authors would like to thank Ken-ichi Kawarabayashi for discussing the details
of his fixed parameter algorithm for computing multiway cuts given in~\cite{bib-cut-fpt}
during the NII workshop---Graph Algorithms and Combinatorial Optimization.

\end{document}